%% file: main.tex
\documentclass[11pt,letterpaper]{article}

\usepackage[T1]{fontenc}
\usepackage{lmodern}
\usepackage[margin=1in]{geometry}
\usepackage{amsmath,amssymb,amsthm,mathtools}
\usepackage{booktabs,array}
\usepackage{graphicx}
\usepackage{flafter}
\usepackage{placeins}
\usepackage{float}
\floatstyle{ruled}
\newfloat{algofloat}{tbp}{loa}
\floatname{algofloat}{Algorithm}
\newenvironment{algorithm}[1][tb]{%
  \begin{algofloat}[#1]\centering
  \begin{minipage}{0.94\linewidth}\small
}{%
  \end{minipage}
  \end{algofloat}
}
\newcommand{\KwIn}[1]{\par\noindent\textbf{Input:} #1\par}
\newcommand{\KwOut}[1]{\par\noindent\textbf{Output:} #1\par}
\newcommand{\While}[2]{%
  \par\noindent\textbf{while }#1\textbf{ do}\par
  \begingroup\advance\leftskip by 1.5em #2\par\endgroup
}
\newcommand{\If}[2]{%
  \par\noindent\textbf{if }#1\textbf{ then}\par
  \begingroup\advance\leftskip by 1.5em #2\par\endgroup
}
\newcommand{\Return}[1]{\par\noindent\textbf{return }#1\par}
\usepackage{xcolor}
\usepackage{tikz}
\usetikzlibrary{arrows.meta,decorations.pathreplacing,calc}
\usepackage{microtype}
\usepackage{hyperref}
\usepackage[nameinlink,noabbrev]{cleveref}
\crefname{algofloat}{Algorithm}{Algorithms}
\Crefname{algofloat}{Algorithm}{Algorithms}

\hypersetup{
  colorlinks=true,
  linkcolor=blue!60!black,
  citecolor=blue!60!black,
  urlcolor=blue!60!black,
  breaklinks=true,
  bookmarksnumbered=true,
  pdftitle={Homological Trimming and Regularity of Filtrations via Local
            Obstruction Modules},
  pdfauthor={Siddharth Pritam},
  pdfsubject={Computational topology, persistent homology},
  pdfkeywords={persistent homology, simplicial collapse, flag complexes,
               local homology, preprocessing}
}

\newtheorem{theorem}{Theorem}[section]
\newtheorem{lemma}[theorem]{Lemma}
\newtheorem{proposition}[theorem]{Proposition}
\newtheorem{corollary}[theorem]{Corollary}

\theoremstyle{definition}
\newtheorem{definition}[theorem]{Definition}
\newtheorem{example}[theorem]{Example}

\newcommand{\redH}{\widetilde{\mathrm H}}
\newcommand{\lk}{\operatorname{Lk}}
\newcommand{\cl}{\operatorname{Cl}}
\newcommand{\Dgm}{\operatorname{Dgm}}
\newcommand{\RPD}{\operatorname{RPD}}
\newcommand{\RSD}{\operatorname{RSD}}
\newcommand{\Oc}{\mathcal O}
\newcommand{\Bc}{\mathcal B}
\newcommand{\eps}{\varepsilon}

\title{Homological Trimming and Regularity of Filtrations\\
       via Local Obstruction Modules}
\author{Siddharth Pritam\thanks{Chennai Mathematical Institute, Chennai, India.
  \texttt{spritam@cmi.ac.in}}}
\date{\today}

\begin{document}

\maketitle
\input{sections/abstract}

\input{sections/introduction}
\input{sections/running_figure}
\input{sections/preliminaries}
\input{sections/local_obstruction_modules}
\input{sections/deletion_certificates}
\input{sections/trimming_algorithm}
\input{sections/batch_deletion}
\input{sections/regularity_descriptors}
\input{sections/experiments}
\input{sections/conclusion}
\input{sections/backmatter}

\input{sections/references}
\appendix
\input{sections/appendix_missing_proofs}
\input{sections/appendix_endpoint_blockers}
\input{sections/appendix_rsd_experiment}

\end{document}

%% file: sections/abstract.tex
\begin{abstract}
Existing link-based combinatorial preprocessing methods speed up the computation
of persistent homology by removing a vertex or edge only when its link remains
a cone. We replace this condition with a quantitative homological certificate.
The reduced homology of the filtered link of a generator---a vertex or
edge---defines a local obstruction module whose future part describes the
effect of deleting that generator. Its barcode certifies either exact deletion
or an explicit bound on the bottleneck error, and a conflict colouring extends
this guarantee to families of generators. Our implementation,
\textsc{HomTrim}, removes a further $13\%–41\%$ of the input edges beyond domination-only preprocessing and reduces backend persistence time by
$1.55\times$--$4.61\times$ on weighted flag filtrations. The same module also
yields regularity diagrams that measure how far a generator can move before
becoming visible to homology, together with a multiscale stability result.

\end{abstract}

%% file: sections/introduction.tex
\section{Introduction}
\label{sec:intro}
Persistent homology (PH) is a central tool in topological data analysis, but
its computation can be expensive. Standard matrix-reduction algorithms have
worst-case complexity $O(n^3)$ in the number of simplices, although they are
often much faster in practice
\cite{ZomorodianCarlsson2005,EdelsbrunnerHarer2010}. The practical difficulty is
that a filtered complex may contain enormous numbers of simplices, many of
which can be removed without changing its persistence barcode. Preprocessing
therefore seeks to simplify the complex before persistence is computed, for
instance by discrete Morse theory \cite{MischaikowNanda2013}, by reductions of
the filtered complex \cite{DlotkoWagner2014}, or, for clique complexes, by
passing to a smaller simplicial set with the same homology
\cite{Zomorodian2010}. Strong collapses and edge collapses
\cite{BoissonnatPritamPareek2018,BoissonnatPritam2019,BoissonnatPritam2020,GlissePritam2022},
building
on Barmak and Minian's theory of strong homotopy types
\cite{BarmakMinian2012}, certify such simplifications through local link
conditions: a vertex or edge can be removed when it remains
\emph{dominated}---equivalently, when its link remains a simplicial cone---at
every relevant future parameter value.

Conicality is stronger than what is required homologically. A link may be
acyclic without being a cone, so a generator can be homologically harmless
even when no domination certificate is available. Moreover, the existing
link-based certificate is all-or-nothing: it does not record how long an
obstruction persists, and therefore gives no direct way to exchange a
controlled amount of error for further reduction. We address both points with
a local persistence module whose barcode gives a quantitative deletion
certificate and also measures the freedom around a generator's filtration
value.

\paragraph{The local obstruction module.}
Let $\mathcal F$ be the filtration, write $K_t$ for the complex at parameter
$t$, and let $K$ be the final complex. A \emph{generator} is a vertex in a
lower-star filtration or an edge in an edge-filtered flag complex; its arrival
time is denoted by $x_\gamma$. We record how the link of $\gamma$ changes while
the filtration values of all other generators remain fixed.

For a lower-star filtration induced by a vertex function
$f:K^{(0)}\to\mathbb R$, a generator is a vertex $v$ and $x_v=f(v)$. For an
edge-filtered flag complex, let $G$ be the final graph and
$w:E(G)\to\mathbb R$ its edge-weight function; a generator is an edge $e$ and
$x_e=w(e)$. Writing $\lk_K(\gamma)$ for the simplicial link in $K$, define
$$
\begin{aligned}
A_v(t)
&=\{\tau\in\lk_K(v):f(u)\leq t
      \text{ for every vertex }u\in\tau\},\\
A_e(t)
&=\{\tau\in\lk_K(e):w(e')\leq t
      \text{ for every edge }e'\subseteq\tau\cup e,\ e'\neq e\}.
\end{aligned}
$$
The first line retains the part of the link whose vertices have appeared by
time $t$. The second retains a simplex when all edges needed to form it with
$e$, apart from $e$ itself, have appeared. This uses only the common
neighbourhood of the endpoints of $e$, not the full ambient flag complex.

In either setting, $A_\gamma(s)\subseteq A_\gamma(t)$ for $s\leq t$. Taking
reduced homology over a fixed field therefore gives, in each degree $q$, the
persistence module
$$
\Oc_{\gamma,q}(t)=\redH_q(A_\gamma(t)).
$$
Together these form the \emph{local obstruction module} of $\gamma$. We call
$\gamma$ \emph{homologically regular at $t$} when
$\Oc_{\gamma,q}(t)=0$ for every $q$, or equivalently, when $A_\gamma(t)$ is
acyclic over the chosen field.

\paragraph{The deletion certificate.}
Regularity at the arrival time alone does not justify deletion: a generator
may be harmless when it appears and become part of a nontrivial cycle later.
We therefore retain the obstruction module from $x_\gamma$ onward. In degree
$q$, the \emph{future obstruction module} is
$$
\Oc_{\gamma,q}^{\geq x_\gamma}(t)
=
\begin{cases}
0, & t<x_\gamma,\\
\Oc_{\gamma,q}(t), & t\geq x_\gamma.
\end{cases}
$$
Let $\mathcal F\setminus\gamma$ be the filtration obtained by deleting
$\gamma$ and all its cofaces, so its complex at $t$ is
$K_t\setminus\gamma=\{\sigma\in K_t:\gamma\not\subseteq\sigma\}$. The relative
groups $H_k(K_t,K_t\setminus\gamma)$ form a persistence module
$H_k(\mathcal F,\mathcal F\setminus\gamma)$ that measures the effect of this
deletion throughout the filtration.

The local and global modules are, up to a degree shift, the same object.

\begin{theorem}[\Cref{thm:future-relative}]
For every $k$,
$$
H_k(K_t,K_t\setminus\gamma)
\cong
\begin{cases}
0, & t<x_\gamma,\\
\redH_{k-\dim\gamma-1}(A_\gamma(t)), & t\geq x_\gamma.
\end{cases}
$$
These isomorphisms commute with the filtration maps and hence give
$$
H_k(\mathcal F,\mathcal F\setminus\gamma)
\cong
\Oc_{\gamma,k-\dim\gamma-1}^{\geq x_\gamma}
$$
as persistence modules.
\end{theorem}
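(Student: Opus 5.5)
Two cases. If $t<x_\gamma$, then $\gamma\notin K_t$, so $K_t\setminus\gamma=K_t$ and the relative group vanishes. For $t\ge x_\gamma$ we use the standard excision identification of homology relative to the complement of the open star with the reduced homology of the link, shifted in degree.

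\textbf{Identifying the star and link at time $t$.} Since $\gamma\in K_t$, the simplices of $K_t$ containing $\gamma$ are exactly the joins $\gamma*\tau$ with $\tau\in\lk_{K_t}(\gamma)\cup\{\emptyset\}$. The first concrete step is to check that $\lk_{K_t}(\gamma)=A_\gamma(t)$ for $t\ge x_\gamma$.
\begin{itemize}
\item In the lower-star case, $v*\tau\in K_t$ iff every vertex of $v*\tau$ has value $\le t$. Given $f(v)\le t$, this reduces to the defining condition of $A_v(t)$.
\item In the flag case, $e*\tau\in K_t$ iff every edge of $e\cup\tau$ has weight $\le t$ (the flag property). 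Given $w(e)\le t$, this says that all edges $e'\neq e$ appear by $t$, which is the definition of $A_e(t)$. Note that $\tau\in\lk_K(e)$ is automatic because $K$ is the final flag complex.
\end{itemize}

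\textbf{The isomorphism at a fixed $t$.} Write $L=A_\gamma(t)$ and $\overline{\mathrm{St}}=\gamma*L$ for the closed star, so $\overline{\mathrm{St}}\cap(K_t\setminus\gamma)=\partial\gamma*L$.
\begin{itemize}
\item The subcomplexes $\overline{\mathrm{St}}$ and $K_t\setminus\gamma$ cover $K_t$. Simplicial excision therefore gives $H_k(K_t,K_t\setminus\gamma)\cong H_k(\gamma*L,\partial\gamma*L)$.
\item At the chain level, $C_\bullet(\gamma*L,\partial\gamma*L)$ has basis the simplices $\gamma*\tau$ with $\tau\in L\cup\{\emptyset\}$. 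Its boundary differs from the augmented boundary of $\tau$ only by the sign $(-1)^{\dim\gamma+1}$, because the terms deleting a vertex of $\gamma$ lie in $\partial\gamma*L$ and vanish.
\item Hence the relative chain complex is isomorphic to the augmented chain complex $\widetilde C_\bullet(L)$ shifted by $\dim\gamma+1$. This gives $H_k\cong\redH_{k-\dim\gamma-1}(L)$.
\end{itemize}
Working at the chain level with the explicit map $\gamma*\tau\mapsto\tau$ avoids needing a separate excision argument, and it is what makes naturality transparent.

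\textbf{Naturality (main point of care).} For $s\le t$, the inclusion $(K_s,K_s\setminus\gamma)\to(K_t,K_t\setminus\gamma)$ sends $\gamma*\tau$ to $\gamma*\tau$. Under the chain isomorphism this corresponds to the inclusion $A_\gamma(s)\subseteq A_\gamma(t)$. We must also check that the relative chain groups are quotients $C(K_t)/C(K_t\setminus\gamma)$ spanned by exactly these basis elements, so the identification is on the nose and the squares commute.

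When $s<x_\gamma\le t$, the source is $0$ and commutativity is trivial, which matches the future module $\Oc^{\ge x_\gamma}$, whose structure maps out of times before $x_\gamma$ are zero. Assembling over all $t$ gives the isomorphism of persistence modules $H_k(\mathcal F,\mathcal F\setminus\gamma)\cong\Oc^{\ge x_\gamma}_{\gamma,k-\dim\gamma-1}$.

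The main obstacle is the link identification in the flag case. The same chain-level description also fixes the degree conventions: the empty simplex gives reduced homology in degree $-1$, which corresponds to $k=\dim\gamma$, the case where $L$ is empty.
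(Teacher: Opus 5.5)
Your proof is correct and follows the paper's argument. For $t\geq x_\gamma$ the deleted part is the block $\gamma*A_\gamma(t)$ attached along $(\partial\gamma)*A_\gamma(t)$, and the chain-level identification $\gamma\cup\tau\mapsto\tau$ of \Cref{lem:join-pair} commutes with the inclusions; you additionally spell out the identification $\lk_{K_t}(\gamma)=A_\gamma(t)$ that the paper leaves implicit, and you rightly note that reading the relative chains of $(K_t,K_t\setminus\gamma)$ directly makes excision unnecessary.
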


The shift comes from joining a link class with $\gamma$. More importantly, the
theorem turns a computation in one filtered link into a global certificate.
Writing $\Dgm_k(\mathcal F)$ for the degree-$k$ persistence diagram and $d_B$
for bottleneck distance, the long exact sequence of the pair and the induced
matching theorem of Bauer and Lesnick \cite{BauerLesnick2015} give:

\begin{corollary}[\Cref{cor:trimming}]
If every interval in the barcode of
$\Oc_{\gamma,q}^{\geq x_\gamma}$ has length at most $\eps$, for every $q$, then
for every $k$,
$$
d_B\bigl(\Dgm_k(\mathcal F),
          \Dgm_k(\mathcal F\setminus\gamma)\bigr)
\leq\eps.
$$
In particular, if the future obstruction module vanishes in every degree,
then deleting $\gamma$ preserves persistent homology.
\end{corollary}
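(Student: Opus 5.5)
The plan is to compare $\mathcal F$ and $\mathcal F\setminus\gamma$ through the long exact sequence of the pair, and then use the induced matching theorem of Bauer and Lesnick to turn the size of the relative module into a bottleneck bound. Write $L_t=K_t\setminus\gamma$. The inclusions $L_t\hookrightarrow K_t$ are natural in $t$, so they define a morphism of persistence modules $\phi_k\colon H_k(\mathcal F\setminus\gamma)\to H_k(\mathcal F)$. The long exact sequence of the pair is natural with respect to the filtration maps, so it gives exact sequences of persistence modules
$$
H_{k+1}(\mathcal F,\mathcal F\setminus\gamma)\to H_k(\mathcal F\setminus\gamma)\xrightarrow{\phi_k} H_k(\mathcal F)\to H_k(\mathcal F,\mathcal F\setminus\gamma).
$$
From this, $\ker\phi_k$ is a quotient of $H_{k+1}(\mathcal F,\mathcal F\setminus\gamma)$, and $\operatorname{coker}\phi_k$ is a submodule of $H_k(\mathcal F,\mathcal F\setminus\gamma)$.

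By \Cref{thm:future-relative}, these relative modules are isomorphic to $\Oc^{\geq x_\gamma}_{\gamma,k-\dim\gamma}$ and $\Oc^{\geq x_\gamma}_{\gamma,k-\dim\gamma-1}$ respectively. By hypothesis, each has all barcode intervals of length at most $\eps$. Since the filtration is finite, these modules are pointwise finite-dimensional, which the structure theorem requires.

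The step I expect to be the main obstacle is moving the interval-length bound to $\ker\phi_k$ and $\operatorname{coker}\phi_k$, which are only a quotient and a submodule of the relative modules. The property to use is \emph{$\eps$-triviality}: a module $M$ is $\eps$-trivial if every transition map $M(t)\to M(t+\eps')$ vanishes for $\eps'>\eps$. A module is $\eps$-trivial exactly when all its intervals have length at most $\eps$; the endpoint conventions need care here, and the half-open intervals arising from a finite filtration make the two conditions agree. The property is inherited by submodules and quotients, because their transition maps are restrictions and factorizations of maps that vanish. So $\ker\phi_k$ and $\operatorname{coker}\phi_k$ are both $\eps$-trivial.

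The induced matching theorem \cite{BauerLesnick2015} now applies. For a morphism whose kernel and cokernel are $\eps$-trivial, the induced matching between the two barcodes is a $\delta$-matching for a $\delta$ depending only on $\eps$. In the form that gives
$$
d_B\bigl(\Dgm_k(\mathcal F),\Dgm_k(\mathcal F\setminus\gamma)\bigr)\leq\eps,
$$
I need to check the constant, because some statements of the theorem have kernel and cokernel bounds that add, which would give $2\eps$.

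If the constant does come out as $2\eps$, my fallback is to use the finer structure here. The cokernel bound controls births and the kernel bound controls deaths independently. Unmatched bars are shorter than $\eps$, so they are within $\eps/2$ of the diagonal. Matched bars move each endpoint by at most $\eps$, so their $\ell^\infty$ displacement is at most $\eps$. Together these give the bound $\eps$.

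The exact case is the specialization $\eps=0$. Then both relative modules vanish, so $\phi_k$ is an isomorphism for every $k$, and the two diagrams coincide.
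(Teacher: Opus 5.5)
Your proposal is correct and follows the paper's route: the long exact sequence of the pair, the identification of the two adjacent relative modules with the future obstruction modules in degrees $k-\dim\gamma-1$ and $k-\dim\gamma$ via \Cref{thm:future-relative}, $\eps$-triviality of the kernel and cokernel, and the induced matching theorem, which is exactly \Cref{lem:approx-relative-module}. Your worry about a factor of two is unfounded, so the fallback is not needed: when the kernel and cokernel are both $\eps$-trivial, the Bauer--Lesnick theorem already moves each endpoint of a matched bar by at most $\eps$ and leaves unmatched only bars of length at most $\eps$, which gives the bound $\eps$ directly.
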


Thus the longest future obstruction bar bounds the cost of deleting $\gamma$
and its cofaces. This is a homological certificate: the link need not be a
cone, and the deletion need not be a collapse. \Cref{fig:running} illustrates
the argument for one edge.

\paragraph{The trimming algorithm.}
The certificate is local but not static: one deletion may change nearby
links. The algorithm therefore tests a candidate in the current complex,
deletes it when its future obstruction meets the chosen tolerance, and
re-queues only the candidates that share a simplex with the deleted one, which
are the only ones whose links can change. Each deletion is thus certified in
the complex in which it is performed.

Let $\rho(\gamma_i)$ be the longest future obstruction bar when
$\gamma_i$ is deleted. If a run deletes
$\gamma_1,\ldots,\gamma_m$ and produces a filtration $\mathcal F'$, then
$$
\eta=\sum_{i=1}^m\rho(\gamma_i)
\qquad\text{and}\qquad
d_B\bigl(\Dgm_k(\mathcal F),\Dgm_k(\mathcal F')\bigr)\leq\eta
$$
in every certified degree $k$ (\Cref{thm:sequential-trimming}). Zero cost gives
exact preservation. The bound holds for any processing order, although the
reduction may depend on that order, and each test depends on a local lower-link
filtration rather than the ambient complex.

\paragraph{Deleting many generators at once.}
The additive bound charges every deletion separately. Deletions interact only
through simplices that contain two of them, and when there are no such
simplices the bound improves. If no simplex contains two members of a family
$S$, the relative module of deleting $S$ splits as a direct sum of the future
obstruction modules of its members (\Cref{thm:batch}), so
$$
d_B\bigl(\Dgm_k(\mathcal F),\Dgm_k(\mathcal F\setminus S)\bigr)
\leq\max_{\gamma\in S}\rho(\gamma).
$$
A run can proceed in rounds of such families. One pass over a greedy
colouring of the candidates of cost at most $\eps$ then certifies an error of
at most $(\Delta_c+1)\eps$, where $\Delta_c$ is the largest number of these
candidates that share a simplex with any one of them. The
bound does not depend on how many generators are removed. For deletions that
do interact, we know no bound better than the sum.

\paragraph{Experiments.}
We implement the sequential algorithm for weighted flag filtrations, using
Ripser \cite{Bauer2021} as the backend. Averaged over our test families,
\textsc{HomTrim} removes a further $13\%$--$41\%$ of the input edges beyond
domination-only preprocessing and makes the backend $1.55\times$--$4.61\times$
faster. The
local homology has its own cost, however, and in our current implementation it
is not recovered against domination-only preprocessing. Against direct
computation, the complete pipeline becomes competitive as the input grows
denser: on our sparse-matrix family it is $1.24\times$ faster at average
degree $48$, and on an instance of degree $64$ it finishes while the untrimmed
backend times out. We report where the preprocessing pays for itself and where it does not.

\paragraph{Regularity slack.}
The obstruction module also has a descriptive reading. Suppose that $\gamma$
is regular at $x_\gamma$, and consider the largest interval containing
$x_\gamma$ on which $\Oc_{\gamma,q}(t)=0$ in every degree. The distances from
$x_\gamma$ to the two ends of this interval are its backward and forward
slacks: within this window the value of $\gamma$ can move, with all other
values fixed, without changing persistent homology. Recording this pair for
every regular generator gives the \emph{regularity slack diagram}. It
measures filtration events that can move without becoming visible to
homology, information that is not recorded by the ordinary persistence
diagram.

Exact slack is not Lipschitz stable: an arbitrarily short obstruction bar can
split a long zero interval. The obstruction barcode itself is stable, however.
If bars below a chosen scale are ignored before slack is measured, the
resulting obstruction sets satisfy a scale-shifted stability bound. We develop
the exact and multiscale versions in \Cref{sec:diagrams}. The accompanying
experiment is deliberately small: synthetic examples distinguish local
mechanisms, but we do not claim application-level validation.

\paragraph{Relation to previous work.}
The ingredients of the arguments of our results are standard. The lower link comes from
piecewise-linear Morse theory, where a vertex is classified by the topology of
the link below it \cite{Banchoff1970,EdelsbrunnerHarer2010}; our regularity
condition is that classification, read at every parameter rather than at one.
Local homology has been used to infer structure from samples
\cite{BendichWangMukherjee2012,SkrabaWang2014}. The two results we lean on are
classical as well: the attachment computation behind \Cref{thm:future-relative},
and the induced matching theorem \cite{BauerLesnick2015}, which is what turns a
short relative module into a bottleneck bound. What is new is the use we make
of them. We follow the homology of a single link through the filtration and
identify the resulting local module with the relative persistence of a
deletion. Nothing in that computation depends on the dimension of $\gamma$, so
it covers generators of every dimension, although we develop algorithms only
for vertices and edges. Applied to a whole family at once, it also gives the
splitting of \Cref{thm:batch}.

Collapse-based preprocessing is the closest of the methods mentioned above: it
also certifies one generator at a time, and \Cref{cor:trimming} contains its
domination test, as the case in which acyclicity is witnessed by an apex. The
others rebuild the filtration instead---as a Morse complex
\cite{MischaikowNanda2013}, a reduced chain complex \cite{DlotkoWagner2014}, or
a smaller simplicial set \cite{Zomorodian2010}---while approximation schemes
such as metric sparsification \cite{Sheehy2013} and controlled changes of
filtration values \cite{NigmetovMorozov2024} alter the input globally and bound
the effect afterwards. Here the tolerance is spent one generator at a time, and
each deletion is charged its cost, the length of its longest future obstruction
bar (\Cref{thm:sequential-trimming}).

\paragraph{Scope and organization.}
We work over a fixed field; the certificates do not imply homotopy equivalence,
and the greedy algorithm need not be optimal. General simplex-wise filtrations
satisfy the theorem, but we develop the algorithmic details only for lower-star
and edge-filtered flag filtrations. \Cref{sec:preliminaries} gives the
background, and \Cref{sec:obstruction} defines the local obstruction module.
\Cref{sec:certificate} proves the deletion theorem, \Cref{sec:algorithm} turns
it into a sequential algorithm, and \Cref{sec:batch} treats families of
deletions that can be made together. \Cref{sec:diagrams} returns to the
descriptive side of the module, \Cref{sec:experiments} reports the
experiments, and \Cref{sec:conclusion} collects open questions. Deferred
proofs, the homological degrees at regularity endpoints, and the controlled
descriptor experiment appear in the appendices.

%% file: sections/running_figure.tex
\begin{figure}[ht]
\centering
\resizebox{\linewidth}{!}{%
\begin{tikzpicture}[
  x=1cm,y=1cm,>=Latex,
  vtx/.style={circle,fill=black,inner sep=1.6pt},
  ovtx/.style={circle,draw=black,fill=white,line width=0.6pt,
               inner sep=1.5pt},
  panel/.style={font=\small\bfseries},
  lbl/.style={font=\footnotesize},
  slbl/.style={font=\scriptsize},
  fan/.style={gray!45,line width=0.45pt},
  lk/.style={black!80,line width=0.9pt},
  tri/.style={fill=blue!12,draw=blue!55!black,line width=0.7pt},
  bar/.style={line width=2.4pt,line cap=round}
]

\begin{scope}[shift={(0,0)}]
  \node[panel,anchor=south west] at (-1.45,2.48) {(a)};

  \coordinate (u) at (1.35,2.18);
  \coordinate (v) at (1.35,-0.62);
  \coordinate (q1) at (0.00,0.92);
  \coordinate (q2) at (0.90,0.92);
  \coordinate (q3) at (1.80,0.92);
  \coordinate (q4) at (2.70,0.92);

  \foreach \q in {q1,q2,q3,q4}{
    \draw[fan] (u)--(\q);
    \draw[fan] (v)--(\q);
  }

  \draw[blue!60!black,line width=1.2pt]
    (u) .. controls (-0.85,1.98) and (-0.85,-0.42) .. (v);
  \node[lbl,blue!60!black,anchor=east,align=right]
    at (-0.90,0.82) {$e=uv$\\[-2pt]$x_e=0.5$};

  \draw[lk] (q1)--(q2)--(q3)--(q4);
  \draw[lk] (q1) to[bend right=38] (q4);
  \draw[lk,dash pattern=on 2.2pt off 1.8pt]
    (q2) to[bend left=55] (q4);

  \node[vtx] at (u) {};
  \node[vtx] at (v) {};
  \node[lbl,anchor=south] at (1.35,2.30) {$u$};
  \node[lbl,anchor=north] at (1.35,-0.74) {$v$};
  \foreach \q/\name/\time/\dx in {
      q1/q_1/0.2/-5pt,q2/q_2/0.2/-5pt,q3/q_3/0.4/0pt,q4/q_4/0.4/6pt}{
    \node[ovtx] at (\q) {};
    \node[slbl,anchor=south,inner sep=1pt,fill=white]
      at ([xshift=\dx,yshift=3pt]\q) {$\name$};
    \node[slbl,anchor=north,inner sep=2.5pt]
      at ([yshift=-1pt]\q) {$\time$};
  }

  \node[slbl,anchor=south,fill=white,fill opacity=0.85,text opacity=1,
        inner sep=1pt] at (1.80,1.54) {diagonal: $1.1$};
  \node[slbl,anchor=north,fill=white,fill opacity=0.85,text opacity=1,
        inner sep=1pt] at (1.35,0.22) {$4$-cycle: $0.6$};
\end{scope}

\begin{scope}[shift={(4.45,0)}]
  \node[panel,anchor=south west] at (-0.45,2.48) {(b)};

  \begin{scope}[shift={(0,0)}]
    \node[lbl,anchor=south] at (0.45,2.18) {$t=0.3<x_e$};
    \node[ovtx] at (0.05,1.62) {};
    \node[ovtx] at (0.85,1.62) {};
    \node[lbl,anchor=north,align=center,inner sep=2pt] at (0.45,1.27)
      {disconnected\\[2pt]$\redH_0\neq0$};
  \end{scope}

  \begin{scope}[shift={(1.90,0)}]
    \node[lbl,anchor=south] at (0.45,2.18) {$t=0.8$};
    \coordinate (b1) at (0.05,1.85);
    \coordinate (b2) at (0.85,1.85);
    \coordinate (b3) at (0.85,1.20);
    \coordinate (b4) at (0.05,1.20);
    \draw[lk] (b1)--(b2)--(b3)--(b4)--(b1);
    \foreach \p in {b1,b2,b3,b4}{\node[ovtx] at (\p) {};}
    \node[lbl,anchor=north,align=center,inner sep=2pt] at (0.45,1.04)
      {a $1$-cycle\\[2pt]$\redH_1\neq0$};
  \end{scope}

  \begin{scope}[shift={(3.80,0)}]
    \node[lbl,anchor=south] at (0.45,2.18) {$t=1.2$};
    \coordinate (c1) at (0.05,1.85);
    \coordinate (c2) at (0.85,1.85);
    \coordinate (c3) at (0.85,1.20);
    \coordinate (c4) at (0.05,1.20);
    \filldraw[tri] (c1)--(c2)--(c4)--cycle;
    \filldraw[tri] (c2)--(c3)--(c4)--cycle;
    \draw[lk] (c1)--(c2)--(c3)--(c4)--(c1);
    \draw[lk] (c2)--(c4);
    \foreach \p in {c1,c2,c3,c4}{\node[ovtx] at (\p) {};}
    \node[lbl,anchor=north,align=center,inner sep=2pt] at (0.45,1.04)
      {acyclic\\[2pt]$\redH_*=0$};
  \end{scope}
\end{scope}

\begin{scope}[shift={(1.55,-2.55)}]
  \node[panel,anchor=south west] at (-1.55,0.98) {(c)};
  \def\sx{5.25}

  \fill[gray!8] (0,-0.58) rectangle ({0.5*\sx},0.54);
  \node[slbl,text=black!50] at ({0.25*\sx},0.34)
    {zero before $x_e$};

  \draw[blue!55!black,dash pattern=on 2pt off 1.5pt]
    ({0.5*\sx},-0.58) -- ({0.5*\sx},0.62);
  \node[slbl,blue!60!black,anchor=south]
    at ({0.5*\sx},0.64) {$x_e=0.5$};

  \draw[black!25,dash pattern=on 1.4pt off 1.6pt]
    ({0.6*\sx},-0.58) -- ({0.6*\sx},0.72);
  \draw[black!25,dash pattern=on 1.4pt off 1.6pt]
    ({1.1*\sx},-0.58) -- ({1.1*\sx},0.72);

  \draw[<->,blue!60!black,line width=0.8pt]
    ({0.6*\sx},0.72) -- ({1.1*\sx},0.72);
  \node[lbl,blue!60!black,anchor=south,inner sep=2pt]
    at ({0.85*\sx},0.76) {$\rho(e)=0.5$};

  \node[lbl,anchor=east] at (-0.10,0.04) {$\redH_0$};
  \node[lbl,anchor=east] at (-0.10,-0.46) {$\redH_1$};

  \draw[bar,orange!80!black] ({0.5*\sx},0.16) -- ({0.6*\sx},0.16);
  \draw[bar,orange!80!black] ({0.5*\sx},0.04) -- ({0.6*\sx},0.04);
  \draw[bar,orange!80!black] ({0.5*\sx},-0.08) -- ({0.6*\sx},-0.08);
  \draw[bar,red!70!black] ({0.6*\sx},-0.46) -- ({1.1*\sx},-0.46);

  \draw[->,black!70] (0,-0.72) -- ({1.20*\sx},-0.72);
  \node[lbl,anchor=west,inner sep=3pt] at ({1.20*\sx},-0.72) {$t$};
  \foreach \t/\lab in {0/0,0.5/0.5,0.6/0.6,1.1/1.1}{
    \draw[black!70] ({\t*\sx},-0.67) -- ({\t*\sx},-0.77);
    \node[slbl,anchor=north,inner sep=2.5pt]
      at ({\t*\sx},-0.79) {$\lab$};
  }
\end{scope}

\end{tikzpicture}%
}
\caption{A local deletion certificate for an edge $e$ arriving at $x_e=0.5$.
(a) Both gray edges $uq_i$ and $vq_i$ appear by the value shown below $q_i$,
so $q_i$ may enter $A_e(t)$ before $e$ appears. The four cycle edges enter at
$0.6$, and the dashed diagonal at $1.1$. (b) The link is first disconnected,
then carries a $1$-cycle, and finally becomes acyclic. (c) The future
obstruction modules $\Oc_{e,q}^{\geq x_e}$ discard everything before $x_e$.
The link fails to be a cone at some future values, so domination cannot
remove $e$. For $p\geq2$, so that local degree one is tested, the longest
remaining bar has length $\rho(e)=0.5$, and \Cref{cor:trimming} certifies
deletion of $e$ and its cofaces with bottleneck error at most $0.5$. For
$p=1$ only the degree-zero bars count, and $\rho(e)=0.1$.}
\label{fig:running}
\end{figure}

%% file: sections/preliminaries.tex
\section{Preliminaries}\label{sec:preliminaries}
This section fixes the notation and recalls the homological facts used later.
Standard references include
\cite{EdelsbrunnerHarer2010,ZomorodianCarlsson2005}.

\paragraph{Simplicial complexes.}
An (abstract) simplicial complex on a finite vertex set $X$ is a collection
$K\subseteq 2^X$ such that every subset of a simplex in $K$ also belongs to
$K$. Its elements are called \emph{simplices}. A simplex with $k+1$ vertices
has dimension $k$, and it is \emph{maximal} if it is not properly contained in
another simplex. We write $K^{(k)}$ for the set of $k$-simplices. A
subcollection $L\subseteq K$ that is itself a simplicial complex is a
\emph{subcomplex} of $K$.

A \emph{simplicial map} $\varphi:K\to L$ is a map on vertices for which
$\varphi(\sigma)$ is a simplex of $L$ whenever $\sigma$ is a simplex of $K$.
For a graph $G=(V,E)$, its \emph{clique complex}, or \emph{flag complex},
$\cl(G)$ contains one simplex for every clique of $G$. It is therefore
completely determined by its one-skeleton.

\paragraph{Joins, stars, and links.}
Let $K$ and $L$ have disjoint vertex sets. Their \emph{join} is
$$
K*L=\{\sigma\cup\tau:\sigma\in K,\ \tau\in L\}.
$$
If $v$ is a new vertex, the join $v*K$ is the \emph{cone} over $K$ with apex
$v$. For a simplex $\sigma\in K$, its star and link are
$$
\operatorname{st}_K(\sigma)
=\{\tau\in K:\sigma\cup\tau\in K\},
\qquad
\lk_K(\sigma)
=\{\tau\in\operatorname{st}_K(\sigma):
  \tau\cap\sigma=\varnothing\}.
$$
When $\sigma$ is viewed together with all its faces, these definitions give
$\operatorname{st}_K(\sigma)=\sigma*\lk_K(\sigma)$.

\paragraph{Filtrations.}
A \emph{filtration function} $h:K\to\mathbb R$ assigns a value to every
simplex so that $h(\sigma)\leq h(\tau)$ whenever $\sigma\subseteq\tau$. The
complex present at parameter $t$ is
$$
K_t=\{\sigma\in K:h(\sigma)\leq t\}.
$$
These complexes increase with $t$. We denote the resulting filtration by
$\mathcal F$ and use $K_t$ for its level at $t$.

A vertex function $f:K^{(0)}\to\mathbb R$ induces the \emph{lower-star
filtration}
$$
K_t=\{\sigma\in K:f(u)\leq t
      \text{ for every vertex }u\in\sigma\}.
$$
Thus a simplex $\sigma$ enters at $\max_{u\in\sigma}f(u)$. Similarly, let
$G=(V,E)$ be a finite graph with an edge function $w:E\to\mathbb R$. If
$G_t$ contains all vertices and the edges $e$ with $w(e)\leq t$, then
$K_t=\cl(G_t)$ defines the corresponding \emph{edge-filtered flag complex}.
Each positive-dimensional simplex enters at the largest value of one of its
edges. We assume throughout that all vertices appear before the edges.

\paragraph{Homology.}
All homology is taken over a fixed field $\mathbb F$. Let $C_q(K)$ be the
vector space generated by the $q$-simplices of $K$, with boundary map
$\partial_q:C_q(K)\to C_{q-1}(K)$. The cycles and boundaries are
$Z_q(K)=\ker\partial_q$ and $B_q(K)=\operatorname{im}\partial_{q+1}$, and
$$
H_q(K)=Z_q(K)/B_q(K).
$$
If $L\subseteq K$ is a subcomplex, the quotient
$C_q(K,L)=C_q(K)/C_q(L)$ inherits a boundary map. Its homology $H_q(K,L)$ is
the \emph{relative homology} of the pair $(K,L)$; it treats every chain in
$L$ as zero.

Reduced homology, denoted by $\redH_q(K)$, is obtained from the augmented
chain complex. It agrees with ordinary homology in positive degrees, while
$\dim\redH_0(K)$ is one less than the number of connected components when
$K$ is nonempty. We use the convention
$\redH_{-1}(\{\varnothing\})\cong\mathbb F$.

\paragraph{Persistent homology.}
Let $\mathcal F$ be a filtration with complex $K_t$ at parameter $t$. For
$s\leq t$, the inclusion $K_s\hookrightarrow K_t$ induces a map
$H_q(K_s)\to H_q(K_t)$. The groups and these \emph{structure maps} form the
degree-$q$ persistence module, which we denote by $H_q(\mathcal F)$
\cite{EdelsbrunnerLetscherZomorodian2002,ZomorodianCarlsson2005}. Replacing
ordinary homology by reduced homology gives $\redH_q(\mathcal F)$.

Our filtrations are finite, so each persistence module decomposes uniquely
into interval modules \cite{CrawleyBoevey2015}. The resulting multiset of
intervals is its \emph{barcode}; an interval $[b,d)$ records a class born at
$b$ and dying at $d$. Representing this interval by $(b,d)$ gives the
\emph{persistence diagram}, denoted by $\Dgm_q(\mathcal F)$.

An isomorphism between persistence modules $V$ and $W$ consists of
isomorphisms $\varphi_t:V_t\to W_t$ that commute with every structure map.
Thus, if $v_{s,t}$ and $w_{s,t}$ are the respective maps from $s$ to $t$, then
$$
\varphi_t\circ v_{s,t}=w_{s,t}\circ\varphi_s
\qquad\text{for every }s\leq t.
$$
Isomorphic modules have the same barcode. Approximate agreement is measured
by the \emph{interleaving distance} $d_I$: informally, an $\eps$-interleaving
allows the two modules to map to one another after shifting the parameter by
$\eps$. Persistence diagrams are compared by the \emph{bottleneck distance}
$d_B$, which minimizes the largest $\ell_\infty$ cost of a matching, allowing
points to be matched to the diagonal. For the finite modules considered here,
the isometry theorem gives
$$
d_I(V,W)=d_B\bigl(\Dgm(V),\Dgm(W)\bigr)
$$
\cite{ChazalEtAl2016,Lesnick2015}.

\paragraph{Exact sequences and relative persistence.}
A sequence of vector spaces and linear maps is \emph{exact} if, at each term,
the image of the incoming map is the kernel of the outgoing map. In
particular, a sequence $0\to U\to V\to W\to0$ is short exact when the first
map is injective, the second is surjective, and the image of the first is the
kernel of the second.

For a pair $L\subseteq K$, the relative chain complex fits into the short
exact sequence
$$
0\longrightarrow C_*(L)
\longrightarrow C_*(K)
\longrightarrow C_*(K,L)
\longrightarrow0.
$$
It induces the long exact sequence
$$
\cdots\longrightarrow H_q(L)
\longrightarrow H_q(K)
\longrightarrow H_q(K,L)
\xrightarrow{\delta_q}H_{q-1}(L)
\longrightarrow\cdots,
$$
where $\delta_q$ is the connecting homomorphism.

This construction is natural. A simplicial map of pairs
$\varphi:(K,L)\to(K',L')$, meaning that $\varphi(L)\subseteq L'$, induces
maps on all three homology groups, and these maps commute with every arrow in
the two long exact sequences. Write
$\varphi^L_q$ and $\varphi^{\mathrm{rel}}_q$ for the induced maps on
$H_q(L)$ and $H_q(K,L)$, respectively. If $\delta_q$ and $\delta'_q$ are the
connecting maps, then
$$
\varphi^L_{q-1}\circ\delta_q
=\delta'_q\circ\varphi^{\mathrm{rel}}_q.
$$
Both sides carry $H_q(K,L)$ to $H_{q-1}(L')$.

Now let $\mathcal L\subseteq\mathcal F$ be a filtered pair, with
$L_t\subseteq K_t$ at every parameter. The relative groups
$H_q(K_t,L_t)$ and the maps induced by the inclusions form the
\emph{relative persistence module}, denoted by
$$
R_q=H_q(\mathcal F,\mathcal L).
$$
Similarly, $H_q(\mathcal L)$ denotes the persistence module formed by the
groups $H_q(L_t)$.
At each $t$, the pair $(K_t,L_t)$ has the long exact sequence above.
Naturality says that these sequences commute with the structure maps from
$s$ to $t$. Consequently, the levelwise sequences assemble into a long exact
sequence of persistence modules:
$$
\cdots\longrightarrow H_q(\mathcal L)
\longrightarrow H_q(\mathcal F)
\longrightarrow R_q
\longrightarrow H_{q-1}(\mathcal L)
\longrightarrow\cdots.
$$
Here exactness means exactness at every parameter. Thus the two relative
modules adjacent to $H_q(\mathcal L)\to H_q(\mathcal F)$ control whether this
map is an isomorphism.

\paragraph{Consequences of relative persistence.}
We first record the exact case, a standard consequence of the long exact
sequence \cite[Section~IV.4]{EdelsbrunnerHarer2010}.

\begin{lemma}[Vanishing relative module]
\label{lem:zero-relative-module}
If $R_q=R_{q+1}=0$, then the inclusion induces an isomorphism
$H_q(\mathcal L)\to H_q(\mathcal F)$. In particular, if $R_q=0$ for every
$q$, then it induces an isomorphism in every degree.
\end{lemma}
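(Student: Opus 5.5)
We will argue levelwise, using the long exact sequence of persistence modules for the filtered pair $\mathcal L\subseteq\mathcal F$ recalled above. Fix a parameter $t$ and consider the segment
$$
H_{q+1}(K_t,L_t)\xrightarrow{\delta_{q+1}}H_q(L_t)\xrightarrow{\iota_q}H_q(K_t)\xrightarrow{j_q}H_q(K_t,L_t).
$$
By hypothesis $R_{q+1}(t)=H_{q+1}(K_t,L_t)=0$ and $R_q(t)=H_q(K_t,L_t)=0$.

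Exactness at $H_q(L_t)$ gives $\ker\iota_q=\operatorname{im}\delta_{q+1}=0$, so $\iota_q$ is injective. Exactness at $H_q(K_t)$ gives $\operatorname{im}\iota_q=\ker j_q=H_q(K_t)$, since $j_q$ maps into the zero space; so $\iota_q$ is surjective. Hence $\iota_q$ is an isomorphism of vector spaces at every $t$.

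It remains to see that these levelwise isomorphisms form an isomorphism of persistence modules, in the sense of the definition recalled above. The maps $\iota_q$ at the various parameters are induced by the inclusions $L_t\hookrightarrow K_t$. For $s\le t$, the square formed by $L_s\hookrightarrow K_s$, $L_t\hookrightarrow K_t$ and the two filtration inclusions commutes at the level of simplicial maps. Functoriality of homology therefore gives $\iota_{q,t}\circ\ell_{s,t}=k_{s,t}\circ\iota_{q,s}$, where $\ell_{s,t}$ and $k_{s,t}$ are the structure maps of $H_q(\mathcal L)$ and $H_q(\mathcal F)$. This is also the naturality already built into the statement that the levelwise sequences assemble into a long exact sequence of modules. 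Thus $\{\iota_{q,t}\}$ is a morphism of persistence modules that is bijective at each $t$, and so an isomorphism.

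The second claim follows by applying the first to each degree $q$, since $R_q=0$ for all $q$ gives $R_q=R_{q+1}=0$ for every $q$. There is no real obstacle here. The only point needing care is checking that levelwise bijectivity plus commutation with structure maps suffices for a module isomorphism. That holds because the inverse $\iota_{q,t}^{-1}$ automatically commutes with the structure maps: apply $\iota_{q,t}^{-1}$ on the left and $\iota_{q,s}^{-1}$ on the right of the commuting identity.
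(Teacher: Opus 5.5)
Your proposal is correct and follows the paper's proof: both read off injectivity from $R_{q+1}=0$ and surjectivity from $R_q=0$ in the segment $R_{q+1}\to H_q(\mathcal L)\to H_q(\mathcal F)\to R_q$ of the long exact sequence. You spell out levelwise what the paper leaves implicit: that the inclusion-induced maps commute with the structure maps, which by the paper's definition is exactly what makes the levelwise isomorphisms a module isomorphism.
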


\begin{proof}
The relevant part of the long exact sequence is
$$
R_{q+1}\longrightarrow H_q(\mathcal L)
\longrightarrow H_q(\mathcal F)\longrightarrow R_q.
$$
The first vanishing gives injectivity and the second gives surjectivity.
\end{proof}

The following local form is the corresponding excision statement
\cite[Section~IV.3]{EdelsbrunnerHarer2010}.

\begin{corollary}[Local attachment principle]
\label{lem:local-attachment-principle}
Let $K=L\cup B$ and $C=L\cap B$. If $H_*(B,C)=0$, then
$L\hookrightarrow K$ induces an isomorphism on homology. More generally,
suppose $K_t=L_t\cup B_t$ at every parameter, where both $L_t$ and $B_t$
increase with $t$, and set $C_t=L_t\cap B_t$. If
$H_*(B_t,C_t)=0$ for every $t$, then the inclusions
$L_t\hookrightarrow K_t$ induce an isomorphism of persistence modules.
\end{corollary}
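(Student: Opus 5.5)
The statement to prove is the local attachment principle (\Cref{lem:local-attachment-principle}). I would reduce it to \Cref{lem:zero-relative-module} via excision. The plan is to show that $H_*(K,L)\cong H_*(B,C)$ naturally. The vanishing hypothesis then kills every relative group, and the long exact sequence gives the isomorphism.

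\textbf{Static case.} Since $K=L\cup B$ and $C=L\cap B$ as simplicial complexes, every simplex of $K$ not in $L$ lies in $B$ and not in $C$. Hence the inclusion of chain complexes $C_*(B)\to C_*(K)$ induces an isomorphism
$$
C_*(B)/C_*(C)\xrightarrow{\ \cong\ }C_*(K)/C_*(L).
$$
Indeed, both quotients have as basis the simplices of $B\setminus C=K\setminus L$, and the boundary maps agree modulo the subcomplex. This is simplicial excision at the chain level, with no subdivision needed. So $H_q(K,L)\cong H_q(B,C)=0$ for all $q$. \Cref{lem:zero-relative-module}, applied to the constant one-step setting or directly via the long exact sequence of the pair, then shows that $L\hookrightarrow K$ is an isomorphism in every degree.

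\textbf{Filtered case.} At each $t$ the same chain isomorphism $\psi_t: C_*(B_t,C_t)\to C_*(K_t,L_t)$ exists. It is induced by the inclusion $B_t\subseteq K_t$, which is a map of pairs $(B_t,C_t)\to(K_t,L_t)$ because $C_t\subseteq L_t$. For $s\le t$, the structure maps on both sides are also induced by inclusions, since $B_s\subseteq B_t$, $C_s\subseteq C_t$, $K_s\subseteq K_t$ and $L_s\subseteq L_t$. All the maps involved are inclusions of simplices, so the square commutes at the chain level. This gives an isomorphism of relative persistence modules $H_*(\mathcal B,\mathcal C)\cong H_*(\mathcal F,\mathcal L)$. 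The hypothesis makes the left side zero in every degree, and \Cref{lem:zero-relative-module} yields that $H_q(\mathcal L)\to H_q(\mathcal F)$ is an isomorphism of persistence modules for every $q$.

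\textbf{Main obstacle.} The only point needing care is the basis identification $K_t\setminus L_t=B_t\setminus C_t$, together with the check that the induced map on quotients is a chain map and not merely a linear bijection. I would verify the latter by noting that the boundary of a simplex in $B_t$, computed in $K_t$ or in $B_t$, is the same chain. Its faces outside $B_t$ do not exist, so reducing modulo $L_t$ versus modulo $C_t$ discards exactly the same faces, namely those in $C_t\subseteq L_t$. The fact that $C_t=L_t\cap B_t$ is automatically a subcomplex needs no separate argument, since it is the intersection of two subcomplexes.
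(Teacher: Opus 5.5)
Your proposal is correct and follows the paper's own argument: the chain-level identification $C_*(B,C)\cong C_*(K,L)$ kills the relative homology, and the long exact sequence handles the static case while \Cref{lem:zero-relative-module} handles the filtered one, with naturality holding because every map is induced by an inclusion. The only difference is that you explicitly check the basis identification $K\setminus L=B\setminus C$ and that the boundary maps agree, which the paper's two-line proof leaves implicit.
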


\begin{proof}
The quotient chain complexes $C_*(K,L)$ and $C_*(B,C)$ are naturally
isomorphic. Hence $H_*(K,L)=0$, and the first claim follows from the long
exact sequence. The same identification at every parameter commutes with the
structure maps, so the filtered claim follows from
\Cref{lem:zero-relative-module}.
\end{proof}

We also need a quantitative version. Following
\cite[Section~6]{BauerLesnick2015}, a persistence module $M$ is
\emph{$\eps$-trivial} if every structure map $M_s\to M_t$ with
$t-s\geq\eps$ is zero. For a finite interval-decomposable module, this is
equivalent to every barcode interval having length at most $\eps$.

\begin{lemma}[Approximate relative module]
\label{lem:approx-relative-module}
If $R_q$ and $R_{q+1}$ are $\eps$-trivial, then
$$
d_I\bigl(H_q(\mathcal L),H_q(\mathcal F)\bigr)\leq\eps,
\qquad
d_B\bigl(\Dgm_q(\mathcal L),\Dgm_q(\mathcal F)\bigr)\leq\eps.
$$
\end{lemma}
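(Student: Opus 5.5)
The plan is to apply the induced matching theorem of Bauer and Lesnick \cite[Section~6]{BauerLesnick2015} to the morphism $\iota_q:H_q(\mathcal L)\to H_q(\mathcal F)$ induced by the inclusions. The long exact sequence of persistence modules from the preliminaries gives the exact segment
$$
R_{q+1}\xrightarrow{\ \delta\ }H_q(\mathcal L)\xrightarrow{\ \iota_q\ }H_q(\mathcal F)\xrightarrow{\ j\ }R_q .
$$
By exactness, $\ker\iota_q=\operatorname{im}\delta$, which is a quotient of $R_{q+1}$. Likewise $\operatorname{coker}\iota_q\cong\operatorname{im}j$, which is a submodule of $R_q$.

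The first step is to check that $\eps$-triviality passes to quotients and submodules.
\begin{itemize}
\item For a submodule $N\subseteq M$, the structure map $N_s\to N_t$ is the restriction of $M_s\to M_t$. It is therefore zero whenever $t-s\geq\eps$.
\item For a quotient $M\twoheadrightarrow Q$, surjectivity at level $s$ and naturality show that $Q_s\to Q_t$ factors through $M_s\to M_t$. It is therefore also zero.
\end{itemize}
Hence $\ker\iota_q$ and $\operatorname{coker}\iota_q$ are both $\eps$-trivial. All modules here come from finite filtrations, so they are pointwise finite-dimensional and interval-decomposable. The theorem of Bauer and Lesnick then applies.

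The second step is to invoke that theorem in the form of \cite[Section~6]{BauerLesnick2015}. A morphism of pointwise finite-dimensional modules whose kernel and cokernel are both $\eps$-trivial induces a matching between the barcodes that is an $\eps$-matching. Each matched pair of intervals has endpoints within $\eps$, and each unmatched interval has length at most $\eps$. Such a matching has $\ell_\infty$-cost at most $\eps$ when unmatched bars are sent to the diagonal, so $d_B(\Dgm_q(\mathcal L),\Dgm_q(\mathcal F))\leq\eps$. The isometry theorem quoted in the preliminaries then gives $d_I\leq\eps$. Alternatively, one could build the $\eps$-interleaving directly.

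The main obstacle is matching conventions. Bauer--Lesnick's sharp statement uses closed-open intervals and $\delta$-triviality with the exact constant. If a constant such as $2\eps$ appears in some version, I would use the refined version: their Theorem~6.1 gives birth shifts controlled by the cokernel and death shifts controlled by the kernel, each by at most $\eps$. That bounds the bottleneck cost by $\eps$, since the $\ell_\infty$ norm takes the maximum rather than the sum. Unmatched bars have length at most $\eps$, hence cost at most $\eps/2\leq\eps$ to the diagonal. Everything else is routine once exactness has identified the kernel and cokernel.
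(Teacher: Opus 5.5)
Your proposal is correct and follows the paper's proof. Exactness identifies $\ker\iota_q$ with a quotient of $R_{q+1}$ and $\operatorname{coker}\iota_q$ with a submodule of $R_q$, so both are $\eps$-trivial; the induced matching theorem of Bauer--Lesnick then gives the bottleneck bound, and the isometry theorem gives the interleaving bound. Your aside that one could build the $\eps$-interleaving directly is not routine: lifting through the cokernel and then removing the kernel ambiguity naively costs $2\eps$. Nothing in your argument depends on that aside, however.
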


\begin{proof}
Exactness shows that the kernel of
$H_q(\mathcal L)\to H_q(\mathcal F)$ is a quotient of a submodule of
$R_{q+1}$, while its cokernel is a submodule of $R_q$. Both are therefore
$\eps$-trivial. The induced-matching theorem
\cite[Theorem~6.1]{BauerLesnick2015} gives the bottleneck bound, and the
isometry theorem gives the interleaving bound \cite{ChazalEtAl2016}.
\end{proof}

\paragraph{Cone and join attachments.}
The local pieces used later are cones, or more generally joins of a simplex
with its link. Their relative homology is determined by the reduced homology
of the link. We call a complex $A$ \emph{acyclic} if $\redH_*(A)=0$.

\begin{lemma}[Cone attachment]
\label{lem:cone-attachment}
Let $A$ be a finite simplicial complex and let $v*A$ be its cone. Then
$$
H_q(v*A,A)\cong\redH_{q-1}(A).
$$
Consequently, $A\hookrightarrow v*A$ induces an isomorphism on homology if
and only if $A$ is acyclic.
\end{lemma}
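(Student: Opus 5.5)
The plan is to compare the long exact sequence of the pair $(v*A,A)$ with the long exact sequence in reduced homology, and then use that the cone is acyclic. First I check that the reduced and unreduced relative groups agree. The augmented chain complexes of $v*A$ and of $A$ both have $\mathbb F$ in degree $-1$, and the inclusion is the identity there. So the quotient chain complex is unchanged and $H_q(v*A,A)$ is also the relative homology of the augmented complexes. The pair therefore has a long exact sequence in reduced homology,
$$
\cdots\to\redH_q(A)\to\redH_q(v*A)\to H_q(v*A,A)\xrightarrow{\delta_q}\redH_{q-1}(A)\to\redH_{q-1}(v*A)\to\cdots.
$$
This works uniformly even when $A=\{\varnothing\}$ or $A$ is empty, with the convention $\redH_{-1}(\{\varnothing\})\cong\mathbb F$ stated in the preliminaries. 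In that case $v*A$ is the single vertex $v$ and $H_0(v,\varnothing)\cong\mathbb F$.

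The key input is that the cone is acyclic: $\redH_*(v*A)=0$. I would prove this with the standard chain contraction $c(\sigma)=v*\sigma$, defined for simplices $\sigma$ of $A$, extended by $c(v*\sigma)=0$ and with $c$ of the empty simplex equal to $v$. One checks $\partial c+c\partial=\mathrm{id}$ on the augmented complex; the only care needed is with signs and the degree $-1$ term. With this, the sequence shows that $\delta_q$ is an isomorphism $H_q(v*A,A)\cong\redH_{q-1}(A)$.

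For the consequence, take the unreduced long exact sequence in the following form. The inclusion $A\hookrightarrow v*A$ induces isomorphisms in all degrees if and only if $H_*(v*A,A)=0$.
\begin{itemize}
\item If $H_*(v*A,A)=0$, this follows from \Cref{lem:zero-relative-module}, used levelwise.
\item Conversely, suppose all the maps $H_q(A)\to H_q(v*A)$ are isomorphisms. Then exactness squeezes $H_q(v*A,A)$ between a surjection and an injection, so it is $0$.
\end{itemize}
By the first part, $H_*(v*A,A)=0$ holds exactly when $\redH_*(A)=0$, that is, when $A$ is acyclic.

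The main obstacle is only bookkeeping at the bottom degree, where reduced and unreduced homology differ and $A$ may be empty. To handle it, I work entirely with augmented complexes, which makes the argument uniform.
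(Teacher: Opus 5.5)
Your proposal is correct and follows the paper's own argument: the long exact sequence of $(v*A,A)$, combined with the acyclicity of the cone, makes $\delta_q$ an isomorphism, and exactness then gives the equivalence with acyclicity of $A$. You also fill in details the paper leaves implicit: the identification of the relative complex with the augmented one, an explicit chain contraction of the cone, and the bookkeeping in degree $-1$.
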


\begin{proof}
The cone $v*A$ is contractible, so the isomorphism follows from the long exact
sequence of $(v*A,A)$. The inclusion preserves homology precisely when all
relative groups vanish, equivalently when $A$ is acyclic.
\end{proof}

\begin{lemma}[Join attachment]
\label{lem:join-pair}
Let $\sigma$ be a simplex of dimension $d$ and let $A$ be a finite simplicial
complex on a disjoint vertex set. Writing $\partial\sigma$ for the boundary of
$\sigma$, we have
$$
H_q\bigl(\sigma*A,(\partial\sigma)*A\bigr)
\cong\redH_{q-d-1}(A).
$$
These relative groups vanish in every degree if and only if $A$ is acyclic.
\end{lemma}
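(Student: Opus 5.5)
The plan is to identify the relative chain complex of the pair with a shifted augmented chain complex of $A$. Write $\bar\sigma$ for $\sigma$ together with all its faces. The simplices of $\bar\sigma*A$ not in $(\partial\sigma)*A$ are exactly those of the form $\sigma\cup\tau$ with $\tau\in A$ or $\tau=\varnothing$. Indeed, a simplex $\rho\cup\tau$ with $\rho\subseteq\sigma$ lies in $(\partial\sigma)*A$ precisely when $\rho\neq\sigma$. Hence the relative chain group $C_q(\sigma*A,(\partial\sigma)*A)$ has a basis indexed by the simplices $\tau$ of dimension $q-d-1$, where $\tau$ ranges over $A$ together with the empty simplex, of dimension $-1$. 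This is exactly the augmented chain group $\widetilde C_{q-d-1}(A)$.

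Next I will check that this bijection $\sigma\cup\tau\mapsto\tau$ intertwines the boundaries up to sign. Orient $\sigma\cup\tau$ by listing the vertices of $\sigma$ first. The boundary then splits into two sums. The first removes a vertex of $\sigma$; these terms lie in $(\partial\sigma)*A$ and so vanish in the quotient. The second removes a vertex of $\tau$ and carries the sign $(-1)^{d+1}$ relative to the corresponding term of $\partial\tau$. When $\tau$ is a vertex, the remaining face is $\sigma$ itself, which matches the augmentation $\widetilde C_0\to\widetilde C_{-1}$ sending each vertex to $\varnothing$. So the relative boundary equals $(-1)^{d+1}$ times the augmented boundary of $A$ under the bijection. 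Multiplying the boundary by a global sign changes neither kernels nor images. Therefore the relative chain complex is isomorphic to the augmented complex of $A$ shifted by $d+1$, and taking homology gives $H_q(\sigma*A,(\partial\sigma)*A)\cong\redH_{q-d-1}(A)$.

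The vanishing statement then follows immediately. As $q$ ranges over all integers, $q-d-1$ ranges over all degrees, so the relative groups vanish in every degree exactly when $\redH_*(A)=0$, that is, when $A$ is acyclic. This includes degree $-1$, which accounts for the case $A=\varnothing$: the empty complex has $\redH_{-1}\cong\mathbb F$ and is not acyclic, and indeed then $H_d(\bar\sigma,\partial\sigma)\cong\mathbb F$.

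The only delicate point is the sign and degree bookkeeping, and in particular handling the empty simplex consistently with the convention $\redH_{-1}(\{\varnothing\})\cong\mathbb F$. As a cross-check I would compare with the alternative route of induction on $d$ via \Cref{lem:cone-attachment}, writing $\sigma=v*\sigma'$ and using the long exact sequence of a triple together with the suspension identity. I expect the direct chain-level identification to be cleaner. It is also natural in $A$ for inclusions $A\subseteq A'$, which is what the later filtered statement needs.
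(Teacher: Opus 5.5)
Your proposal is correct and takes essentially the same route as the paper. Both identify the relative chains with the simplices $\sigma\cup\tau$, $\tau\in A$ or $\tau=\varnothing$, send $\sigma\cup\tau\mapsto\tau$ onto the augmented chain complex of $A$ shifted by $d+1$, and note that the boundaries agree up to the global sign $(-1)^{d+1}$. Your explicit orientation convention and your handling of the empty simplex and of the case $A=\varnothing$ spell out details that the paper's proof leaves implicit.
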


\begin{proof}
The relative chains are generated by the simplices $\sigma\cup\tau$ with
$\tau\in A$, including $\tau=\varnothing$. Sending $\sigma\cup\tau$ to
$\tau$ identifies the relative chain complex with the augmented chain complex
of $A$, shifted by $d+1$. The boundary maps agree up to the global sign
$(-1)^{d+1}$, and the result follows.
\end{proof}

%% file: sections/local_obstruction_modules.tex
\section{Local obstruction modules}\label{sec:obstruction}

Fixing all filtration values except one generator leaves a simple local
question: over which parameter values does the star of that generator attach
without changing homology? The answer is encoded by a filtration of its link.
We define this filtration first for vertices and then for edges.

\subsection{Lower-star filtrations}

Let $f:K^{(0)}\to\mathbb R$ induce a lower-star filtration, denoted by
$\mathcal F_f$, and fix a vertex $v$ with value $x_v=f(v)$. At a parameter
$t$, define
$$
A_v^f(t)
=\{\tau\in\lk_K(v):f(u)\leq t
  \text{ for every vertex }u\in\tau\}.
$$
Thus $A_v^f(t)$ is the part of the link already present when the value of $v$
is placed at $t$. These complexes increase with $t$ and therefore form a
filtration of $\lk_K(v)$. At the original value of $v$ we recover its lower
link and lower star:
$$
\lk_f^-(v)=A_v^f(x_v),
\qquad
\operatorname{st}_f^-(v)=v*A_v^f(x_v).
$$

Applying reduced homology gives the local persistence module used throughout
the paper.

\begin{definition}[Vertex obstruction module]
\label{def:vertex-obstruction}
For each degree $q$, the \emph{vertex obstruction module} of $v$ is
$$
\Oc_{v,q}^f(t)=\redH_q(A_v^f(t)).
$$
We call $v$ \emph{homologically regular at $t$} if
$\Oc_{v,q}^f(t)=0$ for every $q$.
\end{definition}

The local filtration depends only on the values of vertices other than $v$.
The value $x_v$ is an anchor: it specifies where $v$ is placed inside a
filtration that is already determined by its neighbours. This makes it
possible to read from $\Oc_v^f$ how far the value of $v$ may move.

\begin{theorem}[Local-to-global criterion for lower-star filtrations]
\label{thm:ls-local-global}
Let $f'$ agree with $f$ away from $v$, and let $f'(v)=s$. If
$$
\Oc_{v,q}^f(t)=0
\qquad
\text{for every $q$ and every }
t\in[\min\{x_v,s\},\max\{x_v,s\}),
$$
then $H_q(\mathcal F_f)\cong H_q(\mathcal F_{f'})$ as persistence modules
for every $q$.
\end{theorem}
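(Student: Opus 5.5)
Assume without loss of generality that $s>x_v$; the case $s<x_v$ is symmetric after swapping the roles of $f$ and $f'$. The two filtrations then agree outside the window $[x_v,s)$. Set $L_t=K_t\setminus v$, the part of the filtration not containing $v$. This part is identical for $f$ and $f'$, because the only simplices whose entry value changes are those containing $v$. Write $K_t$ and $K'_t$ for the levels of $\mathcal F_f$ and $\mathcal F_{f'}$.

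For $t<x_v$ we have $K_t=K'_t=L_t$. For $t\geq s$ we have $K_t=K'_t$, since a simplex containing $v$ enters at the maximum of its vertex values, and for $t\geq s$ that maximum is at most $t$ under both functions. For $t\in[x_v,s)$ we have $K'_t=L_t$ and
$$
K_t=L_t\cup B_t,\qquad B_t=v*A_v^f(t).
$$
The simplices of $K_t$ containing $v$ are exactly $v\cup\tau$ with $\tau\in A_v^f(t)$, together with their faces. Moreover $L_t\cap B_t=A_v^f(t)$, because a face of $v*A_v^f(t)$ avoiding $v$ lies in $A_v^f(t)\subseteq L_t$.

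Define a filtered inclusion $\iota:\mathcal F_{f'}\to\mathcal F_f$. At first sight this fails, because at $t\geq s$ the complexes coincide, while at $t<s$ we have $K'_t\subseteq K_t$. Since $K'_t\subseteq K_t$ holds for every $t$ (as $f'\geq f$ pointwise), the levelwise inclusions are compatible with all structure maps and form a morphism of persistence modules. It therefore suffices to show that each $H_q(K'_t)\to H_q(K_t)$ is an isomorphism. For $t<x_v$ and $t\geq s$ this map is the identity. For $t\in[x_v,s)$ we apply the Local attachment principle (\Cref{lem:local-attachment-principle}) at a single parameter, with $L=L_t$, $B=B_t$ and $C=A_v^f(t)$. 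By the Cone attachment lemma (\Cref{lem:cone-attachment}),
$$
H_q(B_t,C)\cong\redH_{q-1}(A_v^f(t))=\Oc^f_{v,q-1}(t),
$$
and this vanishes in every degree by hypothesis. One degenerate case needs checking: if $A_v^f(t)$ is empty, then $\redH_{-1}\neq0$, so the hypothesis excludes this case. That is consistent, since attaching an isolated $v$ would create a new class.

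Finally, a morphism of persistence modules that is an isomorphism at every parameter is an isomorphism of persistence modules, because its inverse automatically commutes with the structure maps. This gives $H_q(\mathcal F_f)\cong H_q(\mathcal F_{f'})$.

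The main point needing care is the direction of the comparison map. It must be chosen as the inclusion of the filtration in which $v$ enters later into the one in which it enters earlier. Only then is it a genuine filtered map whose levelwise components are the cone attachments controlled by $A_v^f(t)$ for $t$ in the window. In the symmetric case $s<x_v$, the same argument uses $A_v^{f'}(t)=A_v^f(t)$ for $t\in[s,x_v)$; these agree because the link filtration does not depend on the value of $v$.
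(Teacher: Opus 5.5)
Your proof is correct and follows essentially the same route as the paper. Both arguments decompose $K_t^f=K_t^{f'}\cup(v*A_v^f(t))$ with intersection $A_v^f(t)$ on the window, kill the relative homology by excision and the cone attachment lemma, and assemble the levelwise isomorphisms; the paper does this through the vanishing relative module lemma and you through the local attachment principle applied at each parameter, which amounts to the same argument. Your explicit remark that $A_v^{f'}=A_v^f$ in the case $s<x_v$ fills in what the paper leaves as ``the same with the inclusions reversed.'' The sentence saying the filtered inclusion ``at first sight fails'' is spurious and can simply be dropped.
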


\begin{proof}
Suppose first that $x_v<s$. The two filtrations agree before $x_v$ and from
$s$ onward. For $x_v\leq t<s$, their levels satisfy
$K_t^{f'}\subseteq K_t^f$. Writing $A_t=A_v^f(t)$, the part present only in
$K_t^f$ is the cone $v*A_t$, attached along $A_t$:
$$
K_t^f=K_t^{f'}\cup(v*A_t),
\qquad
K_t^{f'}\cap(v*A_t)=A_t.
$$
By excision and \Cref{lem:cone-attachment},
$$
H_q(K_t^f,K_t^{f'})
\cong H_q(v*A_t,A_t)
\cong\redH_{q-1}(A_t)=0.
$$
The relative module therefore vanishes, and
\Cref{lem:zero-relative-module} gives the claimed persistence-module
isomorphism. The case $s<x_v$ is the same with the inclusions reversed.
\end{proof}

\subsection{Edge-filtered flag complexes}

Let $G=(V,E)$ be a finite graph, let $K=\cl(G)$, and let
$w:E\to\mathbb R$ define an edge-filtered flag complex $\mathcal F_w$. We
continue to assume that all vertices appear before the edges. For an edge
$e\in E$, write $x_e=w(e)$ and define
$$
A_e^w(t)
=\{\tau\in\lk_K(e):w(a)\leq t
  \text{ for every edge }a\subseteq e\cup\tau, a\neq e\}.
$$
A simplex belongs to $A_e^w(t)$ precisely when every edge needed to form it
together with $e$, apart from $e$ itself, has appeared by time $t$. The
complexes $A_e^w(t)$ form a filtration of $\lk_K(e)$, and
$$
\lk_w^-(e)=A_e^w(x_e),
\qquad
\operatorname{st}_w^-(e)=e*A_e^w(x_e).
$$
For $e=uv$, this filtration is supported on the common neighbours of $u$ and
$v$, as in edge collapse \cite{BoissonnatPritam2020}, and can therefore be
constructed without forming the full flag complex.

\begin{definition}[Edge obstruction module]
\label{def:edge-obstruction}
For each degree $q$, the \emph{edge obstruction module} of $e$ is
$$
\Oc_{e,q}^w(t)=\redH_q(A_e^w(t)).
$$
We call $e$ \emph{homologically regular at $t$} if
$\Oc_{e,q}^w(t)=0$ for every $q$.
\end{definition}

As before, $A_e^w$ is determined by the values of the other edges; $x_e$ only
anchors $e$ inside this local filtration. The difference from the vertex case
is that the local block is the join $e*A_e^w(t)$ and attaches along
$(\partial e)*A_e^w(t)$.

\begin{theorem}[Local-to-global criterion for flag filtrations]
\label{thm:flag-local-global}
Let $w'$ agree with $w$ away from $e$, and let $w'(e)=s$. If
$$
\Oc_{e,q}^w(t)=0
\qquad
\text{for every $q$ and every }
t\in[\min\{x_e,s\},\max\{x_e,s\}),
$$
then $H_q(\mathcal F_w)\cong H_q(\mathcal F_{w'})$ as persistence modules
for every $q$.
\end{theorem}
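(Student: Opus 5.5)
The plan mirrors the proof of \Cref{thm:ls-local-global}, with the cone replaced by the join attachment of \Cref{lem:join-pair}. Assume first that $x_e<s$; the case $s<x_e$ is symmetric, with the roles of $w$ and $w'$ exchanged. The two filtrations agree for $t<x_e$: neither contains $e$, and every other simplex has the same entry value. They also agree for $t\geq s$, since both then contain $e$ and have identical one-skeleta. For $x_e\leq t<s$ we have $G^{w'}_t=G^w_t\setminus\{e\}$, so $K^{w'}_t\subseteq K^w_t$. It therefore suffices to show that the relative module $H_q(K^w_t,K^{w'}_t)$ vanishes for all $t$. \Cref{lem:zero-relative-module} then gives the persistence-module isomorphism induced by the inclusions. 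For $t<x_e$ and $t\geq s$ the two levels coincide, so the relative groups are already zero there.

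The key step is a levelwise decomposition. For $x_e\leq t<s$, write $A_t=A_e^w(t)$. I claim that
$$
K^w_t=K^{w'}_t\cup\bigl(e*A_t\bigr),
\qquad
K^{w'}_t\cap\bigl(e*A_t\bigr)=(\partial e)*A_t .
$$
For the first equality, a simplex of $K^w_t$ not in $K^{w'}_t$ is a clique of $G^w_t$ containing $e$. Write it as $e\cup\tau$. Every edge of $e\cup\tau$ other than $e$ has weight at most $t$. Hence $\tau\in\lk_K(e)$ and $\tau\in A_t$, by the definition of $A_e^w(t)$.

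Conversely, let $\tau\in A_t$. Then $e\cup\tau$ is a clique of $G$ all of whose edges other than $e$ lie in $G^w_t$, and $w(e)=x_e\leq t$. So $e\cup\tau\in K^w_t$, and $e*A_t\subseteq K^w_t$.

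For the intersection, a simplex of $e*A_t$ lies in $K^{w'}_t$ exactly when it does not contain both endpoints of $e$, because $e\notin G^{w'}_t$ while all other edges it uses are present at time $t$. These simplices are precisely those of $(\partial e)*A_t$. The only points needing care are that $A_t$ uses edges of the fixed final graph $G$, and that the flag condition lets a simplex be recognised by its edges alone. This bookkeeping is the main (though mild) obstacle.

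Given the decomposition, excision (as in the proof of \Cref{lem:local-attachment-principle}) identifies the relative chain complexes, and \Cref{lem:join-pair} with $d=1$ gives
$$
H_q(K^w_t,K^{w'}_t)\cong H_q\bigl(e*A_t,(\partial e)*A_t\bigr)\cong\redH_{q-2}(A_t).
$$
This is zero for every $q$, by the hypothesis on $[x_e,s)$. Hence the relative module vanishes identically, and \Cref{lem:zero-relative-module} yields $H_q(\mathcal F_w)\cong H_q(\mathcal F_{w'})$ for every $q$, with the isomorphisms induced by inclusion and so compatible with the structure maps.

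In the case $s<x_e$, the same argument applies on $[s,x_e)$ with the inclusion $K^w_t\subseteq K^{w'}_t$. The only change is that the link filtration must be $A_e^{w'}(t)$. This equals $A_e^w(t)$, because $A_e$ never depends on the value of $e$ itself.
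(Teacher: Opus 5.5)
Your proof is correct and follows the paper's argument in \Cref{app:flag-local-global-proof}. It uses the same decomposition $K_t^w=K_t^{w'}\cup(e*A_t)$ with intersection $(\partial e)*A_t$, then excision and \Cref{lem:join-pair} to get $\redH_{q-2}(A_t)=0$, and then naturality (which you package through \Cref{lem:zero-relative-module}) to turn the levelwise isomorphisms into an isomorphism of persistence modules. Your explicit check that $A_e^{w'}(t)=A_e^w(t)$ in the case $s<x_e$ is a useful detail that the paper leaves implicit in ``the same argument applies with the two filtrations reversed.''
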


\begin{proof}[Proof sketch]
Assume $x_e<s$. For $x_e\leq t<s$, write $A_t=A_e^w(t)$. The two levels
differ by the block $e*A_t$, attached along $(\partial e)*A_t$. Hence
excision and \Cref{lem:join-pair} give
$$
H_q(K_t^w,K_t^{w'})
\cong H_q\bigl(e*A_t,(\partial e)*A_t\bigr)
\cong\redH_{q-2}(A_t)=0.
$$
Outside this interval the levels agree. Naturality then turns the levelwise
isomorphisms into an isomorphism of persistence modules. The reversed case and
the full naturality argument are given in \Cref{app:flag-local-global-proof}.
\end{proof}

\paragraph{Moving one generator.}
Both criteria combine the cone or join attachment lemma of
\Cref{sec:preliminaries} with the persistence equivalence theorem
\cite{EdelsbrunnerHarer2010}: levelwise isomorphisms that commute with the
structure maps give isomorphic persistence modules. What they add is a reading
of the obstruction module. The zero set of $\Oc_\gamma$ around $x_\gamma$ is a
range over which the value of $\gamma$ may move, with all other values fixed,
and leave persistent homology unchanged. Nigmetov and Morozov
\cite{NigmetovMorozov2024} move many values at once, guided by the persistence
pairing of the whole filtration, in order to shift chosen points of the
diagram; here one generator moves, the diagram does not, and the range is read
from a single filtered link. \Cref{sec:diagrams} develops this reading.

\paragraph{Common notation.}
From now on, $\gamma$ denotes either a lower-star vertex or an edge of an
edge-filtered flag complex. We write $x_\gamma$ for its filtration value,
$A_\gamma(t)$ for its filtered link, and
$$
\Oc_{\gamma,q}(t)=\redH_q(A_\gamma(t))
$$
for its obstruction module. The notation always refers to one of the two
constructions above.

%% file: sections/deletion_certificates.tex
\section{Deletion certificates}
\label{sec:certificate}

The preceding section studied how far a generator may move. We now ask when
it may be removed altogether. Regularity at the arrival time is not enough:
the generator may become homologically relevant later. The correct object is
therefore the part of its obstruction module at and after its arrival.

\subsection{Future obstruction}

\begin{definition}[Future obstruction module]
\label{def:future-obstruction}
For a generator $\gamma$ with value $x_\gamma$, define
$$
\Oc_{\gamma,q}^{\geq x_\gamma}(t)
=
\begin{cases}
0, & t<x_\gamma,\\
\Oc_{\gamma,q}(t), & t\geq x_\gamma.
\end{cases}
$$
This is the \emph{future obstruction module} of $\gamma$ in degree $q$.
\end{definition}

Fix a maximum global homological degree $p$. By the degree shift proved below,
only the local degrees $-1\leq q\leq p-\dim\gamma$ are needed to certify
global persistence through degree $p$. We define
$$
\rho_p(\gamma)
=\sup\{d-b:[b,d)\in
\Dgm_q(\Oc_\gamma^{\geq x_\gamma}),
-1\leq q\leq p-\dim\gamma\},
$$
with $\rho_p(\gamma)=0$ when these barcodes are empty. When $p$ is fixed, we
write simply $\rho(\gamma)$. Infinite bars give $\rho(\gamma)=\infty$.

We call $\rho_p(\gamma)$ the \emph{cost} of deleting $\gamma$. By a
\emph{deletion certificate} we mean a condition that is checked in the filtered
link of $\gamma$ alone and that bounds the effect of deleting $\gamma$ on
persistent homology. Ours is the inequality $\rho_p(\gamma)\leq\eps$ for a
chosen tolerance $\eps\geq0$; \Cref{cor:trimming} below shows that it bounds
the bottleneck error by $\eps$. The domination test of strong and edge
collapses is a certificate of the same kind, exact and combinatorial.

For an edge $e=uv$, the entire certificate is computed in the common
neighbourhood of $u$ and $v$. A common neighbour $a$ enters its filtered link
at
$$
\alpha(a)=\max\{w(ua),w(va)\},
$$
and a link edge $ab$ enters at
$$
\beta(ab)=\max\{\alpha(a),\alpha(b),w(ab)\}.
$$
Higher-dimensional link simplices enter by the flag rule. Thus the certificate
never requires construction of the full ambient flag complex.

For an edge and a global degree bound $p$, the relevant local degrees are
$-1\leq q\leq p-1$. In the exact case $\eps=0$, the certificate for $p=1$
therefore asks that the future link be nonempty and connected at every
$t\geq x_e$; for $p=2$ it also asks that the first reduced homology vanish.
These conditions are sufficient, not necessary: in the boundary of an
octahedron, taken as a static complex, deleting an edge preserves $H_0$ and
$H_1$, although its link consists of two points.

\subsection{The relative module of a deletion}

Let $\mathcal F$ be the ambient filtration, with level $K_t$. Deleting
$\gamma$ means deleting it together with every coface. We denote the resulting
filtration by $\mathcal F\setminus\gamma$; its level at $t$ is
$$
K_t\setminus\gamma
=\{\sigma\in K_t:\gamma\not\subseteq\sigma\}.
$$
The relative persistence module of the pair
$(\mathcal F,\mathcal F\setminus\gamma)$ measures exactly what this deletion
removes.

\begin{theorem}[Future obstruction and deletion]
\label{thm:future-relative}
For every $t\geq x_\gamma$ there are natural isomorphisms
$$
H_k(K_t,K_t\setminus\gamma)
\cong
\redH_{k-\dim\gamma-1}(A_\gamma(t)),
$$
while $H_k(K_t,K_t\setminus\gamma)=0$ for $t<x_\gamma$. Consequently,
$$
H_k(\mathcal F,\mathcal F\setminus\gamma)
\cong
\Oc_{\gamma,k-\dim\gamma-1}^{\geq x_\gamma}
$$
as persistence modules.
\end{theorem}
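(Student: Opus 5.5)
The plan is to identify the relative chain complex of the pair $(K_t,K_t\setminus\gamma)$ directly with a shifted augmented chain complex of $A_\gamma(t)$, and then check that this identification is compatible with inclusions. The relative chains $C_*(K_t)/C_*(K_t\setminus\gamma)$ are freely generated by the simplices of $K_t$ that contain $\gamma$. For $t<x_\gamma$ there are no such simplices, because every coface of $\gamma$ enters no earlier than $\gamma$ itself. The relative complex is then zero, which gives the vanishing statement.

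For $t\geq x_\gamma$, the first step is to verify the set-theoretic identity
$$
\{\sigma\in K_t:\gamma\subseteq\sigma\}=\{\gamma\cup\tau:\tau\in A_\gamma(t)\},
$$
where $\tau=\varnothing$ is allowed. In the lower-star case, a simplex $\{v\}\cup\tau$ with $\tau\in\lk_K(v)$ lies in $K_t$ exactly when $f(v)\leq t$ and $f(u)\leq t$ for all $u\in\tau$. Given $t\geq x_v$, this is exactly the condition $\tau\in A_v^f(t)$. In the flag case, $e\cup\tau$ lies in $K_t$ exactly when every edge of $e\cup\tau$ has weight at most $t$. Since $w(e)=x_e\leq t$, this reduces to the defining condition for $A_e^w(t)$. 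Note that for a flag complex $e\cup\tau\in K$ whenever $\tau\in\lk_K(e)$, so no extra condition is lost.

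With the identity in hand, I will invoke the computation in the proof of \Cref{lem:join-pair}. That computation uses only the relative chains $\sigma\cup\tau$ and the terms of their boundary that still contain $\sigma$. Removing a vertex of $\gamma$ from $\gamma\cup\tau$ produces a simplex in $K_t\setminus\gamma$, which is zero in the quotient. Removing a vertex of $\tau$ produces $\gamma\cup\tau'$ with $\tau'$ a face of $\tau$, and $\tau'$ again lies in $A_\gamma(t)$. So the map $\gamma\cup\tau\mapsto\tau$ is an isomorphism of chain complexes, up to the global sign $(-1)^{\dim\gamma+1}$, between $C_*(K_t,K_t\setminus\gamma)$ and the augmented complex of $A_\gamma(t)$ shifted by $\dim\gamma+1$. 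Taking homology gives $H_k(K_t,K_t\setminus\gamma)\cong\redH_{k-\dim\gamma-1}(A_\gamma(t))$. Equivalently, one can argue by excision as in \Cref{lem:local-attachment-principle}: $K_t=(K_t\setminus\gamma)\cup(\gamma*A_\gamma(t))$ with intersection $(\partial\gamma)*A_\gamma(t)$, which places us in \Cref{lem:join-pair} directly.

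The main point needing care is naturality, which is also what makes the final module statement hold. For $s\leq t$, the inclusion of pairs $(K_s,K_s\setminus\gamma)\to(K_t,K_t\setminus\gamma)$ sends the relative generator $\gamma\cup\tau$ to $\gamma\cup\tau$. Under the chain isomorphism this corresponds to the inclusion $\tau\mapsto\tau$ coming from $A_\gamma(s)\subseteq A_\gamma(t)$. The sign $(-1)^{\dim\gamma+1}$ is the same at every level, so the squares commute on the chain level and hence on homology.

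It remains to handle $s<x_\gamma\leq t$. There the source is zero, matching the zero map from $\Oc^{\geq x_\gamma}(s)=0$ in the truncated module. Collecting these cases, the levelwise isomorphisms commute with all structure maps, and so they assemble into the isomorphism $H_k(\mathcal F,\mathcal F\setminus\gamma)\cong\Oc_{\gamma,k-\dim\gamma-1}^{\geq x_\gamma}$ of persistence modules.
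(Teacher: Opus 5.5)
Your proposal is correct and follows essentially the paper's proof, which also identifies $C_*(K_t,K_t\setminus\gamma)$ with the relative chains of the block $\gamma*A_\gamma(t)$ rel $(\partial\gamma)*A_\gamma(t)$ (via excision), applies \Cref{lem:join-pair}, and observes that these chain-level identifications commute with the inclusions. Your version makes explicit what the paper leaves implicit: the identification of the cofaces of $\gamma$ in $K_t$ with $A_\gamma(t)$ in both the lower-star and flag settings, and the naturality check across $x_\gamma$.
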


\begin{proof}
Before $x_\gamma$, no simplex containing $\gamma$ is present. For
$t\geq x_\gamma$, the deleted part is the block
$\gamma*A_\gamma(t)$, and it meets $K_t\setminus\gamma$ along
$(\partial\gamma)*A_\gamma(t)$. Excision and
\Cref{lem:join-pair} therefore give
$$
H_k(K_t,K_t\setminus\gamma)
\cong
H_k\bigl(\gamma*A_\gamma(t),(\partial\gamma)*A_\gamma(t)\bigr)
\cong
\redH_{k-\dim\gamma-1}(A_\gamma(t)).
$$
All identifications are induced at the chain level and commute with the
inclusions from one parameter to the next. They therefore assemble into the
stated isomorphism of persistence modules.
\end{proof}

The deletion bound now follows directly from the relative-persistence lemmas
of \Cref{sec:preliminaries}.

\begin{corollary}[Exact and approximate deletion]
\label{cor:trimming}
Suppose every bar of
$\Oc_{\gamma,q}^{\geq x_\gamma}$ has length at most $\eps$ for
$-1\leq q\leq p-\dim\gamma$. Then, for every $0\leq k\leq p$,
$$
d_B\bigl(
\Dgm_k(\mathcal F),
\Dgm_k(\mathcal F\setminus\gamma)
\bigr)
\leq\eps.
$$
In particular,
$$
d_B\bigl(
\Dgm_k(\mathcal F),
\Dgm_k(\mathcal F\setminus\gamma)
\bigr)
\leq\rho_p(\gamma).
$$
If the relevant future obstruction modules vanish, deleting $\gamma$
preserves persistent homology exactly through degree $p$.
\end{corollary}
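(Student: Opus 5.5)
The plan is to apply \Cref{lem:approx-relative-module} to the filtered pair $\mathcal L=\mathcal F\setminus\gamma\subseteq\mathcal F$. Its relative modules are $R_k=H_k(\mathcal F,\mathcal F\setminus\gamma)$. So the first task is to show that $R_k$ and $R_{k+1}$ are $\eps$-trivial for every $0\leq k\leq p$. By \Cref{thm:future-relative}, $R_k\cong\Oc_{\gamma,k-\dim\gamma-1}^{\geq x_\gamma}$ as persistence modules, so the barcode of $R_k$ is that of the future obstruction module in local degree $q=k-\dim\gamma-1$. An isomorphism of modules preserves barcodes, and for finite interval-decomposable modules $\eps$-triviality is equivalent to every bar having length at most $\eps$. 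Hence $R_k$ is $\eps$-trivial whenever the bars of $\Oc_{\gamma,k-\dim\gamma-1}^{\geq x_\gamma}$ have length at most $\eps$.

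The second step is the degree bookkeeping, which I expect to be the main obstacle. For $0\leq k\leq p$ we need both $q=k-\dim\gamma-1$ (for $R_k$) and $q=k-\dim\gamma$ (for $R_{k+1}$) to be covered by the hypothesis. The upper end works out: $k=p$ needs $R_{p+1}$, which is local degree $p-\dim\gamma$, exactly the top of the hypothesized range $-1\leq q\leq p-\dim\gamma$. At the lower end, any $q<-1$ gives the zero module, since reduced homology vanishes below degree $-1$ (and $R_k=0$ for $k<0$ trivially), so those degrees are vacuously $\eps$-trivial. I will check these two edge cases carefully, in particular the $\redH_{-1}$ convention when $A_\gamma(t)=\{\varnothing\}$. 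The corresponding $\tau=\varnothing$ term in \Cref{lem:join-pair} is what makes local degree $-1$ correspond to global degree $\dim\gamma$.

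With $R_k$ and $R_{k+1}$ both $\eps$-trivial, \Cref{lem:approx-relative-module} gives $d_B(\Dgm_k(\mathcal F\setminus\gamma),\Dgm_k(\mathcal F))\leq\eps$, and $d_B$ is symmetric. The second inequality follows by taking $\eps=\rho_p(\gamma)$, which by definition bounds every bar in the relevant degrees. If $\rho_p(\gamma)=\infty$ the bound is vacuous. For the exact statement, vanishing of the relevant future obstruction modules gives $R_k=R_{k+1}=0$ for $0\leq k\leq p$. Then \Cref{lem:zero-relative-module} shows that the inclusion induces isomorphisms $H_k(\mathcal F\setminus\gamma)\cong H_k(\mathcal F)$ through degree $p$, so the diagrams coincide, which is also the case $\eps=0$ of the bound.
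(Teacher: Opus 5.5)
Your proposal is correct and follows the paper's own proof. In both, \Cref{thm:future-relative} identifies $R_k$ and $R_{k+1}$ for the pair $(\mathcal F,\mathcal F\setminus\gamma)$ with the future obstruction modules in local degrees $k-\dim\gamma-1$ and $k-\dim\gamma$, and \Cref{lem:approx-relative-module,lem:zero-relative-module} then give the approximate and exact statements. Your explicit check of the degree range is the bookkeeping the paper leaves implicit: for $0\leq k\leq p$ the needed local degrees either lie in $-1\leq q\leq p-\dim\gamma$ or fall below $-1$, where the modules vanish.
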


\begin{proof}
For a fixed $k\leq p$, \Cref{thm:future-relative} identifies the two relative
modules adjacent to
$H_k(\mathcal F\setminus\gamma)\to H_k(\mathcal F)$ with the future
obstruction modules in degrees $k-\dim\gamma-1$ and
$k-\dim\gamma$. The claim follows from
\Cref{lem:zero-relative-module,lem:approx-relative-module}.
\end{proof}

\paragraph{Which local degrees control which global ones.}
The proof works one degree at a time, and only two local degrees enter each
global degree: the bound on $\Dgm_k$ uses the bars of
$\Oc_{\gamma,q}^{\geq x_\gamma}$ for $q=k-\dim\gamma-1$ and
$q=k-\dim\gamma$, and nothing else. Both are needed: vanishing in one of them alone
makes the map $H_k(\mathcal F\setminus\gamma)\to H_k(\mathcal F)$ injective,
or else surjective, which by itself is not an isomorphism. Read the other way, one local degree $q$
is relevant to the two global degrees $q+\dim\gamma$ and $q+\dim\gamma+1$,
and degrees below $-1$ contribute nothing. A single global degree can
therefore be certified on its own, and the range $-1\leq q\leq p-\dim\gamma$
used by $\rho_p(\gamma)$ is the union of these pairs over $k\leq p$.

\paragraph{Why the whole future matters.}
A generator may be regular when it appears and become part of a nontrivial
cycle later. Such a change makes
$H_k(K_t,K_t\setminus\gamma)$ nonzero at the later parameter, and
\Cref{thm:future-relative} records it as a future obstruction bar. The test is
therefore not a one-time check at $x_\gamma$.

The guarantee is also independent of the persistence algorithm used after
preprocessing. It compares the persistence diagrams of two filtrations, not
the reduction orders used to compute them. Any correct backend, greedy or
otherwise, computes the same diagrams and inherits the same bound.

\paragraph{Relation to domination.}
If $A_\gamma(t)$ is a cone for every $t\geq x_\gamma$, then its reduced
homology vanishes and the deletion is exact. The domination tests with which
strong and edge collapses certify a removal
\cite{BoissonnatPritamPareek2018,BoissonnatPritam2019,BoissonnatPritam2020,GlissePritam2022}
are therefore special cases of \Cref{cor:trimming}. The homological
certificate is strictly broader. It admits acyclic links that are not cones:
a link that is a path with three edges, for instance, has no apex. And short
obstruction bars give controlled approximate deletions, as in the running
example of \Cref{fig:running}.

%% file: sections/trimming_algorithm.tex
\section{Sequential trimming}
\label{sec:algorithm}

The deletion certificate is local, but it is not static. Deleting one
generator can change the filtered links of nearby generators, so certificates
computed at the start of a run may later become invalid. We therefore test
each candidate in the current filtration and revisit only candidates whose
filtered links can have changed. The first task is to say exactly which
candidates these are.

\subsection{Algorithm and guarantee}

Fix a maximum homological degree $p$ and a tolerance $\eps\geq0$. Whenever a
candidate $\gamma$ is tested, let $\rho_p^{\mathcal F'}(\gamma)$ denote the
longest relevant bar in its future obstruction module, computed in the current
filtration $\mathcal F'$. The superscript is omitted when the current
filtration is clear.

\paragraph{Which certificates a deletion can change.}
Two distinct generators \emph{conflict} in a complex if some simplex contains
both. For edges of a flag complex, $e$ and $f$ conflict exactly when their
endpoints together span a clique: a triangle if they share a vertex, and a
$4$-clique otherwise. For vertices of a lower-star filtration, two vertices
conflict exactly when they are adjacent. Given a candidate set $\Gamma$, the
\emph{conflict neighbourhood} $C(\gamma)$ is the set of candidates that
conflict with $\gamma$ in the current complex.

\begin{lemma}[Locality of certificates]
\label{lem:locality}
If $\gamma$ and $\delta$ do not conflict in $K$, then the filtered link of
$\delta$ in $\mathcal F\setminus\gamma$ equals its filtered link in
$\mathcal F$. In particular, deleting $\gamma$ changes neither
$\Oc_\delta^{\geq x_\delta}$ nor $\rho_p(\delta)$.
\end{lemma}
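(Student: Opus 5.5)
The plan is to compare the two filtered links simplex by simplex, for every parameter $t$. The filtered link of $\delta$ in $\mathcal F$ is the family $A_\delta(t)=\{\tau\in\lk_K(\delta):\delta\cup\tau\in K_t\}$. This description is uniform: in the lower-star case $\delta\cup\tau\in K_t$ exactly when every vertex of $\tau$ has value at most $t$, given $t\geq x_\delta$. In the flag case the defining condition on the edges other than $\delta$ says the same thing once $\delta$ itself is present.

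To avoid tying this to $x_\delta$, I will instead work with the description given by the definitions. A simplex $\tau$ lies in $A_\delta(t)$ exactly when $\tau\in\lk_K(\delta)$ and every vertex of $\tau$ (respectively every edge of $\delta\cup\tau$ other than $\delta$) has value at most $t$. The link of $\delta$ in $\mathcal F\setminus\gamma$ is the same definition with $K$ replaced by $K\setminus\gamma$. The filtration values of the surviving generators do not change, since deletion removes simplices but never alters values.

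The key step is the equality $\lk_{K\setminus\gamma}(\delta)=\lk_K(\delta)$. One inclusion is immediate, because $K\setminus\gamma\subseteq K$. For the other, take $\tau\in\lk_K(\delta)$, so that $\delta\cup\tau\in K$. If $\delta\cup\tau$ contained $\gamma$, this simplex would contain both $\gamma$ and $\delta$, contradicting the assumption that they do not conflict. Hence $\delta\cup\tau\in K\setminus\gamma$. Since $K\setminus\gamma$ is closed under faces, $\tau\in K\setminus\gamma$ as well, and so $\tau\in\lk_{K\setminus\gamma}(\delta)$.

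The membership conditions for $A_\delta(t)$ depend only on values of vertices or edges of $\delta\cup\tau$ other than $\delta$. These values are unchanged, so the two filtrations of the common link coincide level by level, with the same inclusions. Their reduced homology modules are therefore equal. Since $x_\delta$ is also unchanged, the truncations $\Oc_\delta^{\geq x_\delta}$ agree, and so do their barcodes and $\rho_p(\delta)$.

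The only point requiring care, and the likely obstacle, is the flag case. The value function on edges is defined only on edges present in the graph, and deleting $\gamma$ removes the edge $\gamma$ from the graph. If $\gamma$ is an edge, no simplex $\delta\cup\tau$ in the link uses $\gamma$, since that would be a conflict. So every edge consulted in the membership test for $A_\delta(t)$ is still present with the same weight. I also need $\mathcal F\setminus\gamma$ to be the flag complex of $G-\gamma$. This holds because a clique of $G$ avoids the edge $\gamma$ exactly when it does not contain both endpoints of $\gamma$, which is the same as not containing $\gamma$ as a face. With this, the argument is uniform in both settings.
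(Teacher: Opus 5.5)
Your proposal is correct and follows the paper's proof. Non-conflict forces $\delta\cup\tau$ to avoid $\gamma$, which gives $\lk_{K\setminus\gamma}(\delta)=\lk_K(\delta)$, and unchanged filtration values then make the filtered links agree level by level. Your extra check that $\mathcal F\setminus\gamma$ is the flag filtration of $G-\gamma$, so that every weight consulted is still defined, is a detail the paper leaves implicit. You were also right to drop your opening description via $\delta\cup\tau\in K_t$, which holds only for $t\geq x_\delta$.
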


\begin{proof}
Let $\tau\in\lk_K(\delta)$. The simplex $\tau\cup\delta$ lies in $K$, and it
does not contain $\gamma$ because $\gamma$ and $\delta$ do not conflict. Hence
$\tau\cup\delta$ survives the deletion and $\tau\in\lk_{K\setminus\gamma}(\delta)$.
The reverse inclusion holds because $K\setminus\gamma\subseteq K$. Every
simplex keeps its filtration value, so the two filtered links agree.
\end{proof}

After a deletion, therefore, only the members of $C(\gamma)$ need to be tested
again. For them a retest is worthwhile: the cost is not monotone under
deletion, since removing simplices from a link can create acyclicity as easily
as destroy it.

\Cref{alg:future-trim} puts these pieces together. The worklist starts as
$\Gamma$ in the order $\pi$, which affects which generators are removed but not
the guarantee; we return to that choice below. Each step takes the next
candidate $\gamma$ and computes its cost in the current filtration, from the
filtered link of $\gamma$ alone. If the cost is at most $\eps$, then $\gamma$
and its cofaces are deleted, the cost is added to the running total $\eta$, and
the candidates of $C(\gamma)$ go back on the worklist; by \Cref{lem:locality}
no others need retesting. Otherwise $\gamma$ stays, to be tested again only if
a later deletion changes its link. The run ends when the worklist is empty, and
returns $\mathcal F'$, $S$ and $\eta$.

For $\eps=0$ every accepted deletion is exact and $\eta=0$. For $\eps>0$ the
total can be conservative: errors from different deletions need not accumulate
in the worst possible way.

\begin{algorithm}[ht]
\caption{\textsc{HomTrim}}
\label{alg:future-trim}
\KwIn{A finite filtration $\mathcal F$; a candidate set $\Gamma$ of generators,
no one of which is a face of another---for us, the vertices of a lower-star
filtration or the edges of an edge-filtered flag complex; a degree bound $p$; a
tolerance $\eps\geq0$; and a processing order $\pi$.}
\KwOut{A trimmed filtration $\mathcal F'$, the deleted set $S$, and the error
bound $\eta$.}
$\mathcal F'\gets\mathcal F$, $S\gets\varnothing$, $\eta\gets0$.\par
Initialize a worklist $Q\gets\Gamma$, ordered by $\pi$.\par
\While{$Q\neq\varnothing$}{
  Extract the next candidate $\gamma$ from $Q$.\par
  Set $r\gets\rho_p^{\mathcal F'}(\gamma)$, computed from
  $\Oc_{\gamma,q}^{\geq x_\gamma}$ for
  $-1\leq q\leq p-\dim\gamma$.\par
  \If{$r\leq\eps$}{
    Record the conflict neighbourhood $C(\gamma)$ in $\mathcal F'$.\par
    Delete $\gamma$ and all its cofaces from $\mathcal F'$.\par
    $S\gets S\cup\{\gamma\}$ and
    $\eta\gets\eta+r$.\par
    Insert into $Q$ the members of $C(\gamma)$ not already in $Q$.\par
  }
}
\Return{$\mathcal F'$, $S$, $\eta$.}
\end{algorithm}

\begin{theorem}[Sequential trimming guarantee]
\label{thm:sequential-trimming}
Let $\mathcal F'$, $S$, and $\eta$ be the output of
\Cref{alg:future-trim}. Then, for every $0\leq k\leq p$,
$$
d_B\bigl(\Dgm_k(\mathcal F),\Dgm_k(\mathcal F')\bigr)
\leq\eta.
$$
In particular, when $\eps=0$, the algorithm preserves persistent homology
through degree $p$.
\end{theorem}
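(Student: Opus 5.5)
The plan is to argue by induction along the sequence of accepted deletions, using \Cref{cor:trimming} at each step and the triangle inequality for $d_B$. Let $\gamma_1,\ldots,\gamma_m$ be the deleted generators in the order the algorithm accepts them. Write $\mathcal F^{(0)}=\mathcal F$ and $\mathcal F^{(i)}=\mathcal F^{(i-1)}\setminus\gamma_i$. Let $r_i$ be the value of $r$ recorded when $\gamma_i$ is accepted, so that $\eta=\sum_i r_i$. Rejected tests do not modify $\mathcal F'$. Hence the filtration current at the moment $\gamma_i$ is tested is exactly $\mathcal F^{(i-1)}$, and the output is $\mathcal F'=\mathcal F^{(m)}$. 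I will check this bookkeeping first, from the loop structure of \Cref{alg:future-trim}.

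Next I will check that \Cref{cor:trimming} applies to $\gamma_i$ inside $\mathcal F^{(i-1)}$. The corollary, via \Cref{thm:future-relative}, was stated for a generator of an ambient filtration of the two types considered. So I must confirm that $\mathcal F^{(i-1)}$ is still a filtration of the same kind, or at least that the proof of \Cref{thm:future-relative} goes through for it.

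\begin{itemize}
\item Deleting a set of generators together with their cofaces from a flag complex gives the flag complex of the graph with those edges removed.
\item For a lower-star filtration, it gives the lower-star filtration on the induced subcomplex.
\item More basically, the proof of \Cref{thm:future-relative} only uses that $K_t^{(i-1)}$ is a subcomplex filtration. The part containing $\gamma_i$ is then $\gamma_i*A_{\gamma_i}(t)$, with the link computed in the current complex, and it meets the rest along $(\partial\gamma_i)*A_{\gamma_i}(t)$.
\item The generator $\gamma_i$ itself is still present, because candidates are not faces of one another and each is deleted at most once.
\end{itemize}

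Granting this, $r_i=\rho_p^{\mathcal F^{(i-1)}}(\gamma_i)$ is the longest bar in the relevant local degrees. \Cref{cor:trimming} then gives $d_B(\Dgm_k(\mathcal F^{(i-1)}),\Dgm_k(\mathcal F^{(i)}))\le r_i$ for all $0\le k\le p$.

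Finally, the bottleneck distance is a pseudometric on diagrams, so the triangle inequality over $i=1,\ldots,m$ gives $d_B(\Dgm_k(\mathcal F),\Dgm_k(\mathcal F'))\le\sum_i r_i=\eta$. For $\eps=0$, every accepted $r_i$ is $0$, so each relevant future obstruction module has no bars, that is, it vanishes. The exact clause of \Cref{cor:trimming} then gives isomorphic persistence modules at each step, and composing these gives preservation through degree $p$. Termination is not needed for the bound itself: each accepted step removes a candidate permanently, so there are at most $|\Gamma|$ deletions, and each deletion re-inserts only finitely many candidates.

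The main obstacle is the second step, namely confirming that the corollary is legitimately applied in the current filtration rather than the original one. This requires checking that the filtered link $A_{\gamma_i}(t)$ the algorithm computes in $\mathcal F^{(i-1)}$ coincides with the one entering \Cref{thm:future-relative} for that filtration. It also requires checking that \Cref{lem:locality} is used only for efficiency: candidates that are not re-queued may keep stale costs, but they are never accepted without a fresh test. Since every acceptance uses $\rho$ computed in the current $\mathcal F'$, this holds. I will state it explicitly, because it is exactly where a batch-style argument would fail.
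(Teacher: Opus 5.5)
Your proposal is correct and follows the paper's proof: it splits the run into the successive filtrations $\mathcal F_0,\ldots,\mathcal F_m$, applies \Cref{cor:trimming} to $\gamma_i$ in $\mathcal F_{i-1}$ (where its certificate was computed), and sums the bounds with the triangle inequality. You also check three points the paper leaves implicit: that each current filtration is still of the same type, that $\gamma_i$ is still present when it is tested, and that costs are never accepted without a fresh test; your handling of each is correct.
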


\begin{proof}
Suppose the algorithm deletes $\gamma_1,\ldots,\gamma_m$, and let
$\mathcal F_0=\mathcal F,\mathcal F_1,\ldots,\mathcal F_m=\mathcal F'$ be the
successive filtrations. The certificate for $\gamma_i$ is computed in
$\mathcal F_{i-1}$, so \Cref{cor:trimming} gives
$$
d_B\bigl(\Dgm_k(\mathcal F_{i-1}),\Dgm_k(\mathcal F_i)\bigr)
\leq\rho_p^{\mathcal F_{i-1}}(\gamma_i)
$$
for every $k\leq p$. The triangle inequality now gives
$$
d_B\bigl(\Dgm_k(\mathcal F),\Dgm_k(\mathcal F')\bigr)
\leq
\sum_{i=1}^m\rho_p^{\mathcal F_{i-1}}(\gamma_i)=\eta.
$$
\end{proof}

\begin{proposition}[Termination]
\label{prop:termination}
For a finite filtration and a finite candidate set,
\Cref{alg:future-trim} terminates.
\end{proposition}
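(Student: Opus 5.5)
Termination follows from a potential argument based on two observations. First, every iteration of the while loop removes one candidate from the worklist $Q$. Second, candidates are added back to $Q$ only right after a successful deletion. A successful deletion strictly shrinks the current complex: the deleted generator $\gamma$ belongs to $\mathcal F'$ when it is tested, and it is removed together with its cofaces. So the plan is to bound separately the number of successful deletions and the number of iterations between consecutive deletions.

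\emph{Step 1 (finitely many deletions).} Once a generator $\gamma$ has been deleted, it is no longer present in $\mathcal F'$. It therefore conflicts with nothing afterwards and cannot enter any later conflict neighbourhood $C(\delta)$, since conflict is evaluated in the current complex. Hence it is never reinserted into $Q$. If it is still sitting in $Q$ from an earlier insertion, we adopt the convention that testing an absent generator is skipped, or equivalently that $S$ is excluded from reinsertion. Each generator of the finite set $\Gamma$ is therefore deleted at most once, and there are at most $|\Gamma|$ successful deletions. As a cross-check, the number of simplices of the finite complex $K$ strictly decreases with every deletion, which gives the same conclusion.

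\emph{Step 2 (finitely many iterations between deletions).} Between two consecutive deletions, and after the last one, nothing is inserted into $Q$. Each iteration in such a stretch strictly decreases $|Q|$. Since $Q\subseteq\Gamma$ holds as a set (insertion adds only members ``not already in $Q$''), we have $|Q|\leq|\Gamma|$ at all times. The total number of iterations is therefore at most $(|\Gamma|+1)\cdot|\Gamma|$.

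\emph{Step 3 (each iteration is finite).} Computing $\rho_p^{\mathcal F'}(\gamma)$ requires persistent homology of a finite filtered link in finitely many degrees, which is a finite computation. The remaining steps, recording $C(\gamma)$, deleting cofaces and updating $Q$, act on finite sets.

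The main obstacle is the bookkeeping in Step 1, namely making sure that a deleted generator never reappears in the worklist. This reduces to the observation that $C(\gamma)$ is computed in the current complex, which no longer contains previously deleted generators. Aside from that, the argument is the standard lexicographic potential $(|K'|,|Q|)$, which strictly decreases at every iteration and takes values in a well-ordered finite set.
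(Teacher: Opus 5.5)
Your proof is correct and is essentially the paper's argument: each candidate is deleted at most once, each deletion puts at most $|\Gamma|$ entries back on the worklist, and each iteration removes one entry, so the number of iterations is at most $|\Gamma|+|\Gamma|^2=(|\Gamma|+1)|\Gamma|$; your stretch-by-stretch count and the lexicographic potential $(|K'|,|Q|)$ are repackagings of this. The convention in Step~1 is unnecessary: a deleted $\gamma$ has already left $Q$ when it was extracted, and it cannot appear in any later conflict neighbourhood.
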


\begin{proof}
Every iteration removes one entry from $Q$. Entries are added only at the
start, at most $|\Gamma|$ of them, and after a deletion, at most $|\Gamma|$ at
a time. Each candidate is deleted at most once, so at most
$|\Gamma|+|\Gamma|^2$ entries are ever added, and the loop stops after at most
that many iterations.
\end{proof}

When the run ends, every surviving candidate has been tested since the last
deletion that could have changed its link, since any such deletion would have
put it back on the worklist. By \Cref{lem:locality} neither its link nor its
cost has changed since that test, and that cost is greater than $\eps$: no
further deletion is certified at this tolerance. The claim concerns this
certificate only. A surviving candidate may still be removable by another
method, or by ours at a larger tolerance.

Which candidates survive depends on the order $\pi$, because costs are not
monotone: an early rejection can turn into a deletion once a neighbour is gone,
and an early deletion can raise a neighbour's cost and block it. We use reverse filtration order, as in \cite{GlissePritam2022}, since removing
later generators first tends to leave intact the earlier local structure that
certifies them; no optimality is claimed. \Cref{sec:batch} returns to the question of order.

\subsection{Local cost}

For a candidate $\gamma$, let $T_\gamma^{\mathrm{build}}$ be the time needed
to construct its current filtered link. Since the highest required
obstruction degree is $p-\dim\gamma$, it is enough to build the local chain
complex through dimension $p-\dim\gamma+1$. Let $N_\gamma$ be the number of
simplices of the link of $\gamma$ up to that dimension. This is usually much
smaller than the link itself, whose dimension is not bounded by $p$. A
standard matrix reduction then has the worst-case bounds
$$
O(N_\gamma^3)\quad\text{time},
\qquad
O(N_\gamma^2)\quad\text{space}
$$
\cite{EdelsbrunnerLetscherZomorodian2002,ZomorodianCarlsson2005}.
Consequently, one static pass over $\Gamma$ costs
$$
O\left(
  \sum_{\gamma\in\Gamma}T_\gamma^{\mathrm{build}}
  +\sum_{\gamma\in\Gamma}N_\gamma^3
\right),
$$
with working space $O(\max_\gamma N_\gamma^2)$ when certificates are
computed one at a time.

For an edge $e=uv$ in a flag filtration, the local complex is supported on the
common-neighbour graph of $u$ and $v$. If $c_\ell(e)$ is the number of
$\ell$-cliques in this graph, then
$$
N_e=\sum_{\ell=1}^{p+1}c_\ell(e).
$$
Writing $C_e$ for the set of common neighbours, $N_e=O(|C_e|^{p+1})$. For
fixed $p$ the cost is therefore polynomial in the size of the common
neighbourhood, and it does not depend on the size of the ambient complex.

In the sequential algorithm, let $n_\gamma^{\mathrm{test}}$ be the number of
times $\gamma$ is tested, and take $T_\gamma^{\mathrm{build}}$ and
$N_\gamma$ to be upper bounds over those tests. The corresponding bound is
$$
O\left(
  \sum_\gamma n_\gamma^{\mathrm{test}}T_\gamma^{\mathrm{build}}
  +\sum_\gamma n_\gamma^{\mathrm{test}}N_\gamma^3
\right).
$$
The number of tests is also local. Complexes only shrink during a run, so two
candidates that do not conflict in the input never conflict later. A
candidate is reinserted only when a member of its conflict neighbourhood is
deleted, and each member is deleted at most once. Hence
$n_\gamma^{\mathrm{test}}\leq1+\deg_c(\gamma)$, where the \emph{conflict
degree} $\deg_c(\gamma)$ is the size of $C(\gamma)$ in the input. For an
edge $e=uv$ of a flag complex, the edges that conflict with $e$ are $ux$ and
$vx$ for $x\in C_e$, together with the edges of $G$ inside $C_e$, so
$$
\deg_c(e)\leq2|C_e|+|E(G[C_e])|,
$$
with equality when every edge is a candidate. This count uses only the
vertices and edges of the link of $e$, which the certificate builds anyway.

\subsection{Implementation}

The implementation avoids a full local persistence reduction when a cheaper
test is available.

\paragraph{Domination first.}
If the future filtered link remains a cone, then its reduced homology
vanishes. We test this combinatorial certificate first because it is cheaper
than computing local homology
\cite{BoissonnatPritamPareek2018,BoissonnatPritam2020}.

\paragraph{Tests by degree.}
For $p=1$, an edge certificate asks only whether the future link is nonempty
and connected, so graph searches or union--find suffice. For $p=2$, the new
condition is the vanishing of $\redH_1$; for $p=3$, one must also test
$\redH_2$. Our implementation processes the finitely many local event values
in order and updates the relevant boundary ranks, rather than rebuilding the
chain complex at every event.

\paragraph{Current-filtration recomputation.}
After a deletion, only candidates whose common-neighbour complexes may have
changed are returned to the worklist. This set contains the conflict
neighbourhood of the deleted edge, which is all that \Cref{lem:locality}
requires. Every accepted certificate is therefore computed in the filtration
in which the deletion is actually performed, as required by
\Cref{thm:sequential-trimming}.

%% file: sections/batch_deletion.tex
\section{Independent families and deletion rounds}
\label{sec:batch}

\Cref{thm:sequential-trimming} charges every deletion separately, so its
certified error grows with the number of deletions of positive cost. The
sequential algorithm also has to recompute certificates, because one deletion
can change the certificate of another. Both effects have the same source: a
simplex that contains two of the deleted generators. This section shows that
when no such simplex exists, a whole family can be deleted at once at the cost
of its most expensive member, and that this reduces the question of deletion
order, for such families, to a colouring problem on a graph with local
degrees.

\subsection{Batch deletion}

\begin{definition}[Independent family]
\label{def:independent}
A set $S$ of generators is \emph{independent} in $K$ if no simplex of $K$
contains two distinct elements of $S$, that is, if no two members of $S$
conflict.
\end{definition}

For edges of a flag complex, $S$ is independent when no two of its edges span
a clique together. For vertices of a lower-star filtration, $S$ is an
independent set of the one-skeleton. Deleting $S$ means deleting every
simplex that contains an element of $S$; we write $K\setminus S$ and
$\mathcal F\setminus S$ for the results. If $S$ is independent in $K$, it is
independent in every $K_t$ and in every subcomplex of $K$.

\begin{lemma}[Splitting]
\label{lem:splitting}
If $S$ is independent in $K$, then there is an isomorphism of chain complexes
$$
C_*(K,K\setminus S)\cong\bigoplus_{\gamma\in S}C_*(K,K\setminus\gamma).
$$
\end{lemma}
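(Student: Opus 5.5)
The plan is to work directly at the chain level with the simplicial bases. The relative chain group $C_q(K,K\setminus S)=C_q(K)/C_q(K\setminus S)$ has as a basis the classes of the $q$-simplices of $K$ that contain at least one element of $S$. Similarly, $C_q(K,K\setminus\gamma)$ has as a basis the classes of the $q$-simplices containing $\gamma$. Independence of $S$ says that each simplex of $K$ contains at most one element of $S$. So the set of simplices containing some element of $S$ is the \emph{disjoint} union over $\gamma\in S$ of the sets $\{\sigma\in K:\gamma\subseteq\sigma\}$. Sending the class of $\sigma$ (which contains a unique $\gamma_\sigma\in S$) to the class of $\sigma$ in the summand $C_q(K,K\setminus\gamma_\sigma)$ therefore defines a linear isomorphism in each degree $q$.

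The only real content, and the step I expect to need care, is checking that this map commutes with the boundary operators, where a relative boundary discards faces lying in the subcomplex. Take $\sigma\supseteq\gamma$ and write $\partial\sigma=\sum_i\pm\sigma_i$.

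In $C_*(K,K\setminus S)$, the relative boundary keeps exactly those faces $\sigma_i$ that contain some element of $S$. Each such face is a subset of $\sigma$, so by independence the element of $S$ it contains can only be $\gamma$. The kept faces are therefore precisely the faces of $\sigma$ that contain $\gamma$.

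In $C_*(K,K\setminus\gamma)$, the relative boundary keeps exactly the faces that contain $\gamma$. This is the same set of faces with the same signs.

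So the map is a chain map, and being a degreewise isomorphism, it is an isomorphism of chain complexes. The inverse is the sum of the quotient maps $C_*(K,K\setminus\gamma)\to C_*(K,K\setminus S)$. These are well defined because $K\setminus S\subseteq K\setminus\gamma$, and they are chain maps automatically, which gives an alternative way to phrase the argument: show that this sum map is bijective on bases.

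Finally, I would note naturality for later use. Independence passes to every $K_t$, and the construction only uses the simplex bases, so the isomorphisms commute with the inclusions $K_s\subseteq K_t$. This is what will later let the splitting assemble into a direct sum of relative persistence modules, and hence, via \Cref{thm:future-relative}, into a direct sum of future obstruction modules.
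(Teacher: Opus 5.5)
Your main argument is correct and is the same as the paper's: you split the relative basis by the unique element of $S$ contained in each simplex, then observe that a facet of $\sigma\supseteq\gamma$ either contains $\gamma$ or contains no element of $S$ and so vanishes in the quotient. One slip is in your aside: the inclusion $K\setminus S\subseteq K\setminus\gamma$ induces quotient maps $C_*(K,K\setminus S)\to C_*(K,K\setminus\gamma)$, not the reverse, and their tuple is exactly your forward map, which makes the chain-map property automatic and leaves independence needed only for bijectivity on bases.
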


\begin{proof}
The relative chains $C_*(K,K\setminus S)$ have a basis consisting of the
simplices of $K$ that contain an element of $S$. By independence, each such
simplex $\sigma$ contains exactly one element, which we denote by $s(\sigma)$.
Grouping the basis by $s(\sigma)$ splits $C_*(K,K\setminus S)$, as a graded
vector space, into summands $V_\gamma$ spanned by the simplices containing
$\gamma$. These are exactly the basis simplices of $C_*(K,K\setminus\gamma)$.

It remains to compare boundaries. Let $s(\sigma)=\gamma$ and let $\tau$ be a
facet of $\sigma$. If $\tau$ contains $\gamma$, then $s(\tau)=\gamma$ and
$\tau\in V_\gamma$. Otherwise $\tau$ contains no element of $S$, because every
element of $S$ in $\tau$ would also lie in $\sigma$. Then $\tau\in
K\setminus S$ and it vanishes in the quotient. The boundary of $\sigma$ in
$C_*(K,K\setminus S)$ is therefore the sum of its facets that contain
$\gamma$, which is its boundary in $C_*(K,K\setminus\gamma)$.
\end{proof}

\begin{theorem}[Batch deletion]
\label{thm:batch}
Let $S$ be independent in $K$. Then, for every $k$,
$$
H_k(\mathcal F,\mathcal F\setminus S)
\cong
\bigoplus_{\gamma\in S}\Oc_{\gamma,k-\dim\gamma-1}^{\geq x_\gamma}
$$
as persistence modules. Consequently, for every $0\leq k\leq p$,
$$
d_B\bigl(\Dgm_k(\mathcal F),\Dgm_k(\mathcal F\setminus S)\bigr)
\leq\max_{\gamma\in S}\rho_p(\gamma).
$$
In particular, if every member of $S$ has cost zero, deleting $S$ preserves
persistent homology exactly through degree $p$.
\end{theorem}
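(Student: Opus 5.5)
The plan is to apply \Cref{lem:splitting} at every parameter and check that the splitting is compatible with the filtration maps. Fix $t$. Since $S$ is independent in $K$, it is independent in $K_t$, so \Cref{lem:splitting} gives a chain isomorphism
$$
C_*(K_t,K_t\setminus S)\cong\bigoplus_{\gamma\in S}C_*(K_t,K_t\setminus\gamma).
$$
This isomorphism is the identity on basis simplices: a simplex containing $s(\sigma)=\gamma$ is sent to the same simplex in the $\gamma$-summand. For $s\leq t$, the inclusion $K_s\hookrightarrow K_t$ sends a simplex containing $\gamma$ to the same simplex, which still contains exactly $\gamma$. It also sends $K_s\setminus S$ into $K_t\setminus S$, and $K_s\setminus\gamma$ into $K_t\setminus\gamma$. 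The induced maps on relative chains therefore respect the grouping by $s(\sigma)$, and the splitting is an isomorphism of filtered chain complexes. Taking homology, and using that homology commutes with finite direct sums, we get
$$
H_k(\mathcal F,\mathcal F\setminus S)\cong\bigoplus_{\gamma\in S}H_k(\mathcal F,\mathcal F\setminus\gamma)
$$
as persistence modules. \Cref{thm:future-relative}, applied to each summand, identifies the $\gamma$-summand with $\Oc_{\gamma,k-\dim\gamma-1}^{\geq x_\gamma}$. Here we use the naturality stated in \Cref{thm:future-relative} so that the isomorphisms commute with the structure maps. This gives the first claim.

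For the bottleneck bound, fix $0\leq k\leq p$. As in the proof of \Cref{cor:trimming}, the relevant relative modules are $R_k$ and $R_{k+1}$. By the splitting, $R_k$ is a direct sum of future obstruction modules in local degree $q=k-\dim\gamma-1$, and $R_{k+1}$ is a direct sum in local degree $q=k-\dim\gamma$. Both degrees lie in the range $-1\leq q\leq p-\dim\gamma$ when $q\geq -1$; local degrees below $-1$ vanish. The barcode of a direct sum is the union of the barcodes of its summands. Hence every bar of $R_k$ and $R_{k+1}$ has length at most $\max_{\gamma\in S}\rho_p(\gamma)$, and both modules are $\eps$-trivial for $\eps=\max_{\gamma\in S}\rho_p(\gamma)$. \Cref{lem:approx-relative-module} then gives the bound. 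When every cost is zero, both modules vanish and \Cref{lem:zero-relative-module} gives exact preservation.

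The main obstacle is bookkeeping rather than new mathematics. The generators in $S$ may have different dimensions, for instance if vertices and edges were mixed, so the local degree attached to the same global $k$ differs from summand to summand. The argument has to check, summand by summand, that each relevant local degree falls inside the range used to define that summand's own $\rho_p(\gamma)$. The naturality of the splitting is immediate once we observe that it is defined basis-wise by $s(\sigma)$ and that inclusions preserve $s(\sigma)$.
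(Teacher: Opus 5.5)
Your proposal is correct and follows the paper's proof essentially step for step: you apply \Cref{lem:splitting} at each level, get naturality because the assignment $\sigma\mapsto s(\sigma)$ is preserved by the inclusions, identify each summand via \Cref{thm:future-relative}, and conclude with \Cref{lem:approx-relative-module,lem:zero-relative-module}. Your use of the fact that the barcode of a direct sum is the union of the summands' barcodes is equivalent to the paper's remark that a direct sum of $\eps_\gamma$-trivial modules is $(\max_\gamma\eps_\gamma)$-trivial, and your explicit check of the local degree ranges makes precise bookkeeping the paper leaves implicit.
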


\begin{proof}
Apply \Cref{lem:splitting} to every level $K_t$. The assignment
$\sigma\mapsto s(\sigma)$ does not depend on $t$, so the levelwise
isomorphisms commute with the inclusions and give an isomorphism of relative
persistence modules
$H_k(\mathcal F,\mathcal F\setminus S)\cong
\bigoplus_{\gamma}H_k(\mathcal F,\mathcal F\setminus\gamma)$.
\Cref{thm:future-relative} identifies each summand. A structure map of a
direct sum is the direct sum of the structure maps, so if each summand is
$\eps_\gamma$-trivial, the sum is $(\max_\gamma\eps_\gamma)$-trivial. With
$\eps_\gamma=\rho_p(\gamma)$, \Cref{lem:approx-relative-module} gives the
bottleneck bound and \Cref{lem:zero-relative-module} the exact case.
\end{proof}

Deleting the members of $S$ one at a time certifies only
$\sum_{\gamma\in S}\rho_p(\gamma)$, although by \Cref{lem:locality} their
costs do not change along the way. The sum in \Cref{thm:sequential-trimming}
is the price of an argument that must allow deletions to interact. For an
independent family they do not, and the maximum is the correct bound.

Independence cannot be dropped. In the flag complex of a triangle $xyz$, with
all edges at value $0$, each of the edges $xy$ and $xz$ has a one-point link
from its arrival on, so each can be deleted exactly. Deleting both disconnects
$x$.

\subsection{Rounds}

A run can be organized as a sequence of batches. The triangle inequality
gives the following bound, exactly as in the proof of
\Cref{thm:sequential-trimming}, with a round in place of a single deletion.

\begin{corollary}[Deletion rounds]
\label{cor:rounds}
Let $\mathcal F^{(0)}=\mathcal F$. For $j=1,\ldots,r$, let $S_j$ be
independent in the final complex of $\mathcal F^{(j-1)}$, with costs computed
in $\mathcal F^{(j-1)}$, and let $\mathcal F^{(j)}=\mathcal F^{(j-1)}\setminus
S_j$. Then, for every $0\leq k\leq p$,
$$
d_B\bigl(\Dgm_k(\mathcal F),\Dgm_k(\mathcal F^{(r)})\bigr)
\leq\sum_{j=1}^r\max_{\gamma\in S_j}\rho_p^{\mathcal F^{(j-1)}}(\gamma).
$$
If every deleted generator has cost at most $\eps$, the right-hand side is
at most $r\eps$.
\end{corollary}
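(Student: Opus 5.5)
The plan is to mirror the proof of \Cref{thm:sequential-trimming}, with one round playing the role of one deletion. We treat the rounds one at a time and then chain them together with the triangle inequality for the bottleneck distance.

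Fix $j$ with $1\leq j\leq r$. By hypothesis, $S_j$ is independent in the final complex $K^{(j-1)}$ of $\mathcal F^{(j-1)}$. Independence is inherited by every level $K^{(j-1)}_t$, as noted after \Cref{def:independent}. So \Cref{thm:batch} applies to the filtration $\mathcal F^{(j-1)}$ and the family $S_j$, with all future obstruction modules and costs computed in $\mathcal F^{(j-1)}$. It gives, for every $0\leq k\leq p$,
$$
d_B\bigl(\Dgm_k(\mathcal F^{(j-1)}),\Dgm_k(\mathcal F^{(j)})\bigr)
\leq\max_{\gamma\in S_j}\rho_p^{\mathcal F^{(j-1)}}(\gamma).
$$
Here we use that $\mathcal F^{(j)}=\mathcal F^{(j-1)}\setminus S_j$ by definition.

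The only point that needs care is that \Cref{thm:batch} is stated for an ambient filtration $\mathcal F$. We must check that $\mathcal F^{(j-1)}$ is a legitimate input, i.e. a filtration of the same type. Deleting generators together with all their cofaces leaves a subcomplex at every level, and it keeps every surviving simplex's value. So $\mathcal F^{(j-1)}$ is again a finite filtration. Moreover, the proofs of \Cref{lem:splitting} and \Cref{thm:future-relative} only use the simplicial structure of the levels and the filtered links computed in that filtration. The costs $\rho_p^{\mathcal F^{(j-1)}}$ are defined from exactly those links, so the per-round bound holds as stated.

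Finally, summing over the rounds with the triangle inequality for $d_B$ (a metric on diagrams, or equivalently via $d_I$) gives
$$
d_B\bigl(\Dgm_k(\mathcal F),\Dgm_k(\mathcal F^{(r)})\bigr)
\leq\sum_{j=1}^r d_B\bigl(\Dgm_k(\mathcal F^{(j-1)}),\Dgm_k(\mathcal F^{(j)})\bigr)
\leq\sum_{j=1}^r\max_{\gamma\in S_j}\rho_p^{\mathcal F^{(j-1)}}(\gamma).
$$
If every deleted generator has cost at most $\eps$, each maximum is at most $\eps$, so the sum is at most $r\eps$. An empty round contributes $0$, since then $\mathcal F^{(j)}=\mathcal F^{(j-1)}$.
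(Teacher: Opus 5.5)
Your proposal is correct and is the paper's argument: apply \Cref{thm:batch} to each round in the current filtration $\mathcal F^{(j-1)}$, then chain the rounds with the triangle inequality, exactly as in the proof of \Cref{thm:sequential-trimming} with a round in place of a single deletion. The paper leaves implicit the two checks you added: that $\mathcal F^{(j-1)}$ is again a filtration of the same type, and that an empty round contributes zero.
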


The certified error is now governed by the number of rounds rather than the
number of deletions. Let the \emph{conflict graph} have the candidates as
vertices and an edge between every conflicting pair. An independent family is
an independent set of this graph, and a partition of the candidates into $r$
independent families is a proper colouring with $r$ colours. Greedy colouring
uses at most $\Delta_c+1$ colours, where $\Delta_c$ is the maximum conflict
degree.

\begin{proposition}[One colouring pass]
\label{prop:colouring-pass}
Let $\Gamma_0$ be the candidates of cost at most $\eps$ in $\mathcal F$, and
let $\Delta_c$ be the maximum degree of the conflict graph on $\Gamma_0$.
Colour this graph greedily with classes $S_1,\ldots,S_r$, where
$r\leq\Delta_c+1$. Process the classes in order: in round $j$, recompute the
costs of the surviving members of $S_j$ in the current filtration and delete,
together, those of cost at most $\eps$. The output $\mathcal F'$ satisfies
$$
d_B\bigl(\Dgm_k(\mathcal F),\Dgm_k(\mathcal F')\bigr)
\leq(\Delta_c+1)\,\eps
\qquad\text{for }0\leq k\leq p,
$$
however many generators are deleted.
\end{proposition}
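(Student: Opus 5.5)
The plan is to reduce \Cref{prop:colouring-pass} to \Cref{cor:rounds}. Let $\mathcal F^{(0)}=\mathcal F$. In round $j$ let $S'_j\subseteq S_j$ be the surviving members of $S_j$ whose cost, recomputed in $\mathcal F^{(j-1)}$, is at most $\eps$. Set $\mathcal F^{(j)}=\mathcal F^{(j-1)}\setminus S'_j$. Then $\mathcal F'=\mathcal F^{(r)}$. To apply \Cref{cor:rounds} I must check two things. First, each $S'_j$ is independent in the final complex of $\mathcal F^{(j-1)}$. Second, every deleted generator has cost at most $\eps$ computed in $\mathcal F^{(j-1)}$. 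The second holds by construction, since deletion is decided by the recomputed cost in the current filtration. The corollary then gives the bound $\sum_{j=1}^r\max_{\gamma\in S'_j}\rho_p^{\mathcal F^{(j-1)}}(\gamma)\leq r\eps$. Rounds with $S'_j=\varnothing$ contribute $0$ under the convention that the empty maximum is $0$, since they do not change the filtration.

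For independence, I argue in the input complex and then pass to subcomplexes. $S_j$ is a colour class of a proper colouring of the conflict graph on $\Gamma_0$. So no two members of $S_j$ conflict in $K$, the final complex of $\mathcal F$, which means $S_j$ is independent in $K$. Each later final complex $K^{(j-1)}$ is obtained from $K$ by deleting simplices, so it is a subcomplex of $K$. As the paper notes after \Cref{def:independent}, independence is inherited by subcomplexes. Hence $S_j$, and therefore its subset $S'_j$, is independent in $K^{(j-1)}$. One subtlety is that members of $S_j$ might already be absent from $K^{(j-1)}$. This cannot happen here: candidates are generators, no one a face of another, and each candidate is deleted only in its own round. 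Still, surviving is exactly the condition imposed, so the argument only uses members that are present.

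Finally, the greedy colouring bound $r\leq\Delta_c+1$ is standard. Each vertex, when coloured, sees at most $\Delta_c$ coloured neighbours, so it gets a colour in $\{1,\ldots,\Delta_c+1\}$. Combined with the above, this gives $d_B\leq r\eps\leq(\Delta_c+1)\eps$ for $0\leq k\leq p$. The bound does not depend on $|S'_j|$ because each round is charged its maximum cost, not its sum, via \Cref{thm:batch}.

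I expect the only delicate step to be justifying that the colouring computed once, in the input, stays valid after deletions. This requires noting that conflict relations can only disappear as complexes shrink, which is the inheritance of independence by subcomplexes. It also requires that the costs used in \Cref{cor:rounds} are the recomputed ones rather than the initial ones. The initial membership in $\Gamma_0$ plays no role in the error bound. It only bounds the number of rounds through $\Delta_c$.
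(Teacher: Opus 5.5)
Your proposal is correct and follows essentially the same route as the paper: each colour class is independent in $K$, hence (with its subsets) in every later subcomplex, every deleted member has recomputed cost at most $\eps$, and \Cref{cor:rounds} then gives $r\eps\leq(\Delta_c+1)\eps$. Your extra remarks on empty rounds, on why no member of $S_j$ disappears before round $j$, and on the greedy bound $r\leq\Delta_c+1$ only fill in details the paper leaves implicit.
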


\begin{proof}
A colour class is independent in $K$ and hence in every later complex, and so
is each of its subsets. Every round therefore deletes an independent family
whose costs, computed in the current filtration, are at most $\eps$.
\Cref{cor:rounds} gives the bound $r\eps\leq(\Delta_c+1)\eps$.
\end{proof}

The quantity $\Delta_c$ is local. For an edge $e=uv$ of a flag complex,
\Cref{sec:algorithm} showed that its conflict degree is at most
$2|C_e|+|E(G[C_e])|$, a count of the vertices and edges of its link. For
vertices of a lower-star filtration, the conflict graph is a subgraph of the
one-skeleton, so $\Delta_c$ is at most its maximum degree. For the
Freudenthal triangulation of a regular grid this is $6$ in the plane and $14$
in space, whatever the size of the grid.

A single pass stops when its colour classes are used up. Deletions can create
new candidates, however, because costs are not monotone. The
\emph{exhaustive} round scheme therefore repeats: it computes the costs of the
candidates that may have changed, chooses a maximal independent family among
those of cost at most $\eps$, deletes it, and stops when no candidate
qualifies. \Cref{cor:rounds} bounds its error by $r\eps$, where $r$ is the
number of rounds executed. We have no a priori bound on $r$ for this scheme.

\begin{table}[ht]
\centering
\small
\begin{tabular}{@{}lrrrrr@{}}
\toprule
family & $m$ & deleted & rounds $r$ & round bound & additive bound \\
\midrule
block\_clique, $b=10$ &  270 &  108 & 10 & 0.18 & 0.41 \\
                      &  540 &  230 & 11 & 0.28 & 0.74 \\
                      & 1080 &  498 & 11 & 0.34 & 1.24 \\
                      & 2160 & 1037 & 11 & 0.38 & 2.43 \\
\midrule
geom\_knn3, $k=8$     &  299 &  184 &  7 & 0.24 & 0.90 \\
                      &  588 &  417 &  9 & 0.30 & 1.92 \\
                      & 1135 &  843 & 14 & 0.49 & 3.86 \\
                      & 2296 & 1676 & 10 & 0.42 & 7.06 \\
\midrule
sparse\_matrix, $\bar d=12$ &  360 &  75 & 4 & 0.05 & 0.14 \\
                      &  720 &  155 &  4 & 0.13 & 0.47 \\
                      & 1440 &  368 &  5 & 0.14 & 0.84 \\
                      & 2880 &  873 &  5 & 0.22 & 1.81 \\
\bottomrule
\end{tabular}
\caption{Exhaustive deletion rounds at $\eps=0.05$ and $p=2$ on small
instances of the families of \Cref{sec:experiments}, with the number of
vertices growing eightfold within each family. Here $m$ is the number of
input edges, the round bound is the right-hand side of \Cref{cor:rounds}, and
the additive bound is $\sum\rho_p$ over the same deletions, which is what
\Cref{thm:sequential-trimming} would certify. These runs use an independent
Python prototype built on \textsc{Gudhi}, not the implementation of
\Cref{sec:experiments}.}
\label{tab:rounds}
\end{table}

\Cref{tab:rounds} shows how the two bounds behave in practice. The number of
deletions grows roughly tenfold within each family, while the number of
rounds stays between $4$ and $14$. For block cliques a size-independent
count is expected, since disjoint blocks never conflict; the geometric and
sparse-matrix families show the same behaviour without that structure. The additive bound grows with the
number of deletions, and on the largest instances it exceeds the round bound
by factors of $6$ to $17$. A single colouring pass on the largest instances
deletes fewer edges than the exhaustive scheme ($959$, $1285$, and $765$
against $1037$, $1676$, and $873$) and certifies larger errors ($0.73$,
$0.76$, and $0.24$, against $0.38$, $0.42$, and $0.22$). The single pass has
an a priori guarantee; on these instances the exhaustive scheme did better.

\paragraph{Parallelism.}
Within a round, every certificate is computed against the same complex, and
the deletions remove pairwise disjoint sets of simplices, by the partition in
the proof of \Cref{lem:splitting}. Both steps can therefore be carried out in
parallel, and by \Cref{lem:locality} the effect of each deletion on later
certificates is confined to its conflict neighbourhood. One point needs care
in an implementation: independent edges may share a vertex, since $uv$ and
$uw$ are independent whenever $v$ and $w$ are not adjacent, so concurrent
updates to per-vertex adjacency must be synchronized. The implementation
evaluated in \Cref{sec:experiments} is sequential; whether a parallel round
scheme recovers the cost of preprocessing end to end is a question for
measurement.

\paragraph{The dependent case.}
Outside independent families we know no bound better than the sum in
\Cref{thm:sequential-trimming}. The triangle example shows that the maximum
can fail by an arbitrary amount when the costs are not recomputed. With
recomputation, the prototype found pairs and triples of interacting deletions
whose combined bottleneck error reached $96\%$ and $99\%$ of the sum of their
costs, so the sum can be nearly attained by short dependent sequences. Longer
runs behaved differently: in the $22$ sampled runs with six or more deletions
of positive cost, the error never exceeded $28\%$ of the additive bound. We do
not know whether some family forces the error to grow linearly with the number
of deletions.

%% file: sections/regularity_descriptors.tex
\section{Regularity descriptors}
\label{sec:diagrams}

We now return to the first reading of the obstruction module, from
\Cref{sec:obstruction}. The deletion certificate uses the module from
$x_\gamma$ onward. The whole module, on both sides of $x_\gamma$, also says
how far the generator can move before it becomes visible to homology. This
section turns that information into a pair of slacks for each generator and
asks what remains stable under perturbation. None of the algorithmic results
depend on it.

\subsection{Regularity intervals and slack}

Fix a generator $\gamma$ with value $x_\gamma$. Its \emph{bad set} is the set
of parameters at which some obstruction group is nonzero:
$$
B_\gamma
=\{t\in\mathbb R:\Oc_{\gamma,q}(t)\neq0
  \text{ for some }q\}.
$$
Equivalently, $B_\gamma$ is the union of the supports of the bars in the
graded obstruction barcode. The generator is regular at its anchor precisely
when $x_\gamma\notin B_\gamma$.

\begin{definition}[Regularity interval and slack]
\label{def:regularity-interval}
Suppose $\gamma$ is regular at $x_\gamma$. Its \emph{regularity interval}
$I_\gamma$ is the connected component of $\mathbb R\setminus B_\gamma$ that
contains $x_\gamma$. Let $\ell_\gamma$ and $r_\gamma$ be its left and right
endpoints in the extended real line. The \emph{backward slack}, \emph{forward
slack}, and \emph{margin} of $\gamma$ are
$$
L_\gamma=x_\gamma-\ell_\gamma,
\qquad
R_\gamma=r_\gamma-x_\gamma,
\qquad
m_\gamma=\min\{L_\gamma,R_\gamma\}.
$$
\end{definition}

This definition avoids an unnecessary endpoint convention: the obstruction
bars are half-open, but only the two endpoint values enter the descriptors.
Infinite slack is allowed. For example, $R_\gamma=\infty$ when no obstruction
appears after $x_\gamma$.

\begin{example}
\label{ex:sublevel-vertex}
Suppose a vertex $v$ has value $x_v=5$, and that its obstruction module
vanishes for $2\leq t<9$ but not just below $2$ or at $9$. Then
$I_v=[2,9)$, $L_v=3$, $R_v=4$, and $m_v=3$. By \Cref{thm:ls-local-global},
moving the value of $v$ anywhere in $[2,9]$, while keeping all other values
fixed, leaves persistent homology unchanged (\Cref{fig:rsd}).
\end{example}

Thus the margin is an exact one-generator perturbation budget. It does not
assert that all generators may be moved simultaneously by their margins: such
a perturbation changes several local filtrations at once.

In computations one restricts to the local degrees $-1\leq q\leq p-\dim\gamma$,
as for the deletion certificate. The resulting truncated descriptors carry the
same guarantee through global degree $p$, and the stability theorems below
hold for them with the same proofs.

\subsection{The regularity diagrams}

We collect the endpoints, or equivalently the two slacks, over all regular
generators.

\begin{definition}[RPD and RSD]
\label{def:rpd-rsd}
For a filtration $\mathcal F$, its \emph{regularity persistence diagram} and
\emph{regularity slack diagram} are the multisets
$$
\RPD(\mathcal F)
=\{(\ell_\gamma,r_\gamma):\gamma\text{ is regular}\}
  \cup\Delta^\infty,
$$
and
$$
\RSD(\mathcal F)
=\{(L_\gamma,R_\gamma):\gamma\text{ is regular}\}
  \cup\partial^\infty.
$$
Here $\Delta^\infty$ is the diagonal in the extended plane and
$\partial^\infty$ is the boundary of $\mathbb R_{\geq0}^2$, each included
with infinite multiplicity.
\end{definition}

The RPD records where each regular window lies on the filtration line. The RSD
recentres the same window at the generator and records its freedom in the two
directions. The $\ell_\infty$-distance from $(L_\gamma,R_\gamma)$ to the
boundary is exactly $m_\gamma$. Critical generators contribute no interior
point and are represented by the boundary. We write $d_B^\partial$ for
bottleneck distance in the closed quadrant, with interior points allowed to
match to its boundary.

These diagrams are not determined by the ordinary persistence diagram. Let $K$
be a single edge $uv$ with the lower-star function $f(u)=0$ and $f(v)=s>0$.
For every $s$, the persistence diagram consists of one essential class born
at $0$. The vertex $u$ is critical, since its link is empty before $s$, while
$v$ is regular with $I_v=[0,\infty)$. The RSD is therefore the single point
$(s,\infty)$, which records where $v$ sits inside the window in which it can
move freely.

\input{sections/rsd_figure}

\subsection{Stability}
\label{sec:stability}

The obstruction barcode is stable, but an exact regularity interval need not
be. A short bar can appear inside a long zero window and split it, even though
that bar is inexpensive to match to the diagonal. This distinction is the
reason for separating exact slack from its robust, multiscale version. The
first statement is the standard stability theorem, applied to each local
filtration; we record it because everything else is measured against it.

\begin{theorem}[Stability of the obstruction module]
\label{thm:universal-obstruction-stability}
Let $f$ and $g$ be two lower-star functions on the same complex, or two edge
functions on the same graph, with $\|f-g\|_\infty\leq\eps$. Then, for every
generator $\gamma$, the obstruction modules $\Oc_\gamma^f$ and
$\Oc_\gamma^g$ are $\eps$-interleaved. Consequently,
$$
d_B\bigl(\Dgm_q(\Oc_\gamma^f),\Dgm_q(\Oc_\gamma^g)\bigr)\leq\eps
$$
for every degree $q$.
\end{theorem}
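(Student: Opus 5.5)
The plan is to realize each local filtration $A^f_\gamma$ as a sublevel-set filtration of a single fixed complex and then invoke standard stability. Fix $\gamma$ and let $\Lambda=\lk_K(\gamma)$, which is the same complex for $f$ and $g$ because both functions live on the same complex or graph. Define a filtration function $h^f_\gamma:\Lambda\to\mathbb R$. In the lower-star case, set $h^f_\gamma(\tau)=\max_{u\in\tau}f(u)$. In the flag case, set
$$
h^f_\gamma(\tau)=\max\{f(a):a\subseteq\gamma\cup\tau \text{ an edge},\ a\neq\gamma\}.
$$
For $\tau=\varnothing$ the maximum is empty, so $h^f_\gamma(\varnothing)=-\infty$; equivalently, $\varnothing$ is present at every parameter. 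Each of these is monotone under inclusion, since enlarging $\tau$ enlarges the set being maximized over. By the definitions of \Cref{sec:obstruction} we have
$$
A^f_\gamma(t)=\{\tau\in\Lambda:h^f_\gamma(\tau)\leq t\},
$$
and the same holds for $g$.

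The key estimate is $\|h^f_\gamma-h^g_\gamma\|_\infty\leq\eps$. The reason is that, for each simplex, both values are maxima over the same finite index set, and the values of $f$ and $g$ on each index differ by at most $\eps$. For $\tau=\varnothing$ both values equal $-\infty$, so the augmentation term is the same for $f$ and $g$ at all parameters. From this estimate we get the inclusions
$$
A^f_\gamma(t)\subseteq A^g_\gamma(t+\eps),
\qquad
A^g_\gamma(t)\subseteq A^f_\gamma(t+\eps)
$$
for every $t$. These inclusions are compatible with all the internal inclusions, since they are all inclusions of subcomplexes of $\Lambda$. Applying $\redH_q$ therefore yields maps $\Oc^f_{\gamma,q}(t)\to\Oc^g_{\gamma,q}(t+\eps)$ and back. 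Their composites are the structure maps from $t$ to $t+2\eps$, by functoriality of reduced homology applied to composites of inclusions. This is exactly an $\eps$-interleaving.

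The bottleneck bound then follows from the isometry theorem recalled in \Cref{sec:preliminaries}. The modules are finite, hence interval-decomposable, so $d_I=d_B$ and we get $d_B\leq d_I\leq\eps$ in each degree $q$, including $q=-1$.

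The only delicate point is the reduced homology in low degrees, namely $\redH_{-1}$ and the empty simplex. To handle this, I would work with augmented chain complexes in which $\varnothing$ is always present. Inclusions of subcomplexes containing $\varnothing$ induce chain maps on the augmented complexes, so functoriality still holds. With that convention, the degree $-1$ modules are either $0$ at all parameters or $\mathbb F$ on an initial segment $(-\infty,b)$, where $b$ is the entry time of the first vertex of the link. These modules are interleaved by the same argument. The remainder of the proof is routine.
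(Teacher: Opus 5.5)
Your proposal is correct and is essentially the paper's argument. The paper also derives the shifted inclusions $A^f_\gamma(t)\subseteq A^g_\gamma(t+\eps)$ and $A^g_\gamma(t)\subseteq A^f_\gamma(t+\eps)$ directly from $\|f-g\|_\infty\leq\eps$ (vertex by vertex, or edge by edge in the flag case), applies reduced homology to obtain the $\eps$-interleaving, and finishes with algebraic stability; your filtration function $h^f_\gamma$ on $\lk_K(\gamma)$ is a tidy repackaging of that step.

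One harmless slip: $\varnothing$ is always present and every link vertex enters at a finite time, so the degree $-1$ module is never identically zero. It is $\mathbb F$ on $(-\infty,b)$, with $b=\infty$ when the link has no vertices. Nothing in your argument depends on that description.
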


The corresponding generatorwise statement is false for exact slack.

\begin{proposition}[Exact slack is not Lipschitz]
\label{prop:no-go}
Already for edge-filtered flag complexes, there is no constant $C$ for which
$$
\max\{|\ell_\gamma^f-\ell_\gamma^g|,
       |r_\gamma^f-r_\gamma^g|,
       |L_\gamma^f-L_\gamma^g|,
       |R_\gamma^f-R_\gamma^g|\}
\leq C\|f-g\|_\infty
$$
holds for every pair of edge functions $f,g$ and every generator $\gamma$
that is regular in both filtrations.
\end{proposition}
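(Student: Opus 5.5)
The plan is to give an explicit family of small edge-filtered flag complexes. In each one, a perturbation of size $\delta$ inserts a bar of length $\delta$ into the interior of a long zero window of the obstruction module of a fixed edge $e$. This cuts $I_e$ far from one of its ends, and with $M$ the length of the window the ratio of endpoint change to $\|f-g\|_\infty$ is about $M/\delta$. Letting $M/\delta\to\infty$ rules out every constant $C$. This is the phenomenon mentioned before \Cref{thm:universal-obstruction-stability}: the barcode moves by at most $\delta$, but the complement of its support can change drastically.

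\emph{Construction.} Take vertices $u,v,a,b,c,d$. Let $G$ contain the edge $e=uv$, all edges from $u$ and $v$ to $a,b,c,d$, the triangle $ab,bc,ca$, and the edges $da,dc$. The link $\lk_K(e)$ is then the flag complex on $\{a,b,c,d\}$ with these edges. Fix $M>0$ and $0<\delta<M$, and define $f$ by:
\begin{itemize}
\item $w(ua)=w(va)=w(ub)=w(vb)=w(uc)=w(vc)=0$ and $w(ab)=w(bc)=w(ca)=0$;
\item $w(ud)=w(vd)=M$ and $w(da)=w(dc)=M$;
\item $x_e=w(e)=2M$.
\end{itemize}
Let $g$ agree with $f$ except that $g(da)=g(dc)=M+\delta$. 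Then $\|f-g\|_\infty=\delta$. Using the entry rules $\alpha,\beta$ from \Cref{sec:certificate}, the filtered link $A_e(t)$ is as follows.
\begin{itemize}
\item For $t<0$ it is $\{\varnothing\}$, so $\redH_{-1}\neq0$.
\item For $0\le t<M$ it is the filled triangle $abc$, which is acyclic.
\item Under $f$, for $t\ge M$ it is this triangle with $d$ coned over the edge $ac$. This is contractible.
\item Under $g$, for $M\le t<M+\delta$ the vertex $d$ is present but isolated, so $\redH_0\neq0$. From $M+\delta$ on, the link equals the $f$-link, which is acyclic.
\end{itemize}

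\emph{Reading off the descriptors.} Under $f$ the bad set is $(-\infty,0)$, so $e$ is regular at $x_e=2M$ with $I_e^f=[0,\infty)$, giving $\ell_e^f=0$ and $L_e^f=2M$. Under $g$ the bad set is $(-\infty,0)\cup[M,M+\delta)$. Since $2M>M+\delta$, $e$ is again regular, with $I_e^g=[M+\delta,\infty)$, giving $\ell_e^g=M+\delta$ and $L_e^g=M-\delta$. Hence $|\ell_e^f-\ell_e^g|=M+\delta$, while $C\|f-g\|_\infty=C\delta$. Choosing $M>C\delta$ violates the inequality for any proposed $C$. The symmetric variant, with the short bar placed after $x_e$, breaks the $r_\gamma$ and $R_\gamma$ terms in the same way, though one violated term already suffices for the statement.

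\emph{Main point to check.} The only delicate step is verifying that each of the two filtered links is exactly as claimed. Concretely:
\begin{itemize}
\item the entry time of $d$ is $\alpha(d)=\max\{w(ud),w(vd)\}=M$ under both functions;
\item the link edges $da,dc$ enter at $\beta=\max\{\alpha(d),\alpha(a),w(da)\}$, which is $M$ or $M+\delta$ respectively;
\item the flag rule adds no $2$-simplices containing $d$, since $a$ and $c$ are adjacent but $d$ is adjacent only to them, and $acd$ enters simultaneously with $da$ and $dc$.
\end{itemize}
One must also check that $d$ really appears before its edges under $g$, so that it is isolated on $[M,M+\delta)$. This holds because $\alpha(d)=M<M+\delta$. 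All other verifications are direct.
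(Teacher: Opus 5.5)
Your construction is correct and follows the same strategy as the paper's proof. You give an explicit pair of edge functions in which a small perturbation inserts a short obstruction bar inside the regularity window, so that an endpoint moves by an amount unbounded relative to $\|f-g\|_\infty$.

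The only difference is the gadget. You delay the link edges of a new common neighbour, which creates a short $\redH_0$ bar before the anchor and breaks $\ell_e$ and $L_e$. The paper instead separates the closing edge of a four-cycle from its diagonal, which creates a short $\redH_1$ bar after the anchor and breaks $r_e$ and $R_e$. The paper also closes a second, unfilled four-cycle at time $T$ so that every endpoint is finite. This avoids the convention $|\infty-\infty|=0$ that your terms $|r_e^f-r_e^g|$ and $|R_e^f-R_e^g|$ implicitly rely on; the paper adopts that convention elsewhere, so your argument still stands.
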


Exact slack is stable when the geometry of the bad sets is controlled. For a
function $f$, centre the bad set at the generator by writing
$$
C_\gamma^f
=\overline{\{t-x_\gamma^f:t\in B_\gamma^f\}}.
$$

\begin{theorem}[Bad-set control implies slack stability]
\label{thm:badset-slack-stability}
Suppose the filtrations have the same generators and
$d_H(C_\gamma^f,C_\gamma^g)\leq\delta$ for every $\gamma$, where $d_H$
denotes Hausdorff distance. Then
$$
d_B^\partial\bigl(\RSD(\mathcal F_f),\RSD(\mathcal F_g)\bigr)
\leq\delta.
$$
\end{theorem}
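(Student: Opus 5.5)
The plan is to build an explicit matching that pairs each generator with itself, and then check that the cost of each pair is at most $\delta$. The centred bad set $C_\gamma^f$ is a closed subset of $\mathbb R$. The generator $\gamma$ is regular for $f$ exactly when $0\notin C_\gamma^f$; this uses closure together with the half-open bars, and it needs a short remark on the endpoint convention. In that case the regularity interval, translated by $-x_\gamma^f$, is the component of $\mathbb R\setminus C_\gamma^f$ that contains $0$. So $R_\gamma^f$ is the distance from $0$ to the nearest point of $C_\gamma^f\cap[0,\infty)$, and $L_\gamma^f$ is the distance to the nearest point of $C_\gamma^f\cap(-\infty,0]$, with the value $\infty$ when the set is empty. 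Write $D_\pm^f(\gamma)$ for these one-sided distances. These quantities make sense even when $\gamma$ is critical, in which case $\min(L,R)=0$. This suggests treating critical generators as sitting on $\partial^\infty$.

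The key step is a one-sided Lipschitz estimate: if $d_H(C,C')\le\delta$, then $|D_+(C)-D_+(C')|\le\delta$, and the same holds for $D_-$. Let $c\ge0$ be the nearest point of $C$ on the right. Hausdorff closeness gives a point $c'\in C'$ with $|c-c'|\le\delta$. Because $c'\ge-\delta$, a naive argument could only give $D_+(C')\le c+\delta$ when $c'\ge0$. The case $c'\in[-\delta,0)$ is the delicate one. There $0$ lies within $\delta$ of $C'$, so $\min(L',R')\le\delta$: the point of $\RSD(\mathcal F_g)$ is within $\delta$ of the boundary of the quadrant. I therefore split the generators into two classes. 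A generator is \emph{robust} if its $f$-margin and its $g$-margin both exceed $\delta$. For robust generators, every point of $C$ or $C'$ lies outside $(-\delta,\delta)$, so matched points of the two sets have the same sign, and the one-sided distances move by at most $\delta$. Matching $(L^f_\gamma,R^f_\gamma)$ with $(L^g_\gamma,R^g_\gamma)$ then costs at most $\delta$ in $\ell_\infty$, and infinite coordinates match infinite coordinates because emptiness of a half-line portion is preserved when both margins exceed $\delta$.

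For a non-robust generator, one of its two points, say the one for $f$, has margin at most $\delta$. I then want to show that the other point also has margin at most $2\delta$, or better at most $\delta$, so that both can be sent to $\partial^\infty$ at cost at most $\delta$. Sending a point to $\partial^\infty$ costs exactly its margin, so each point only needs margin at most $\delta$ itself. This is the main obstacle. If the $f$-margin is at most $\delta$, the point $0$ is within $\delta$ of $C^f$ and hence within $2\delta$ of $C^g$. That bounds the $g$-margin only by $2\delta$, which gives the wrong constant. The fix I will try is to avoid thresholding both margins at $\delta$. Instead I handle each generator as follows. If its two points can be matched to each other at cost at most $\delta$, do so. Otherwise, show that both margins are at most $\delta$. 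To prove this, suppose the $g$-margin exceeds $\delta$. Then $C^g\cap(-\delta,\delta)=\varnothing$. Every point of $C^f$ is within $\delta$ of a point of $C^g$, and every point of $C^g$ lies at distance at least $\delta$ from $0$. Since the two sets are $\delta$-Hausdorff-close, this sign-preservation argument forces the one-sided distances to agree within $\delta$, unless $C^f$ has points near $0$ with no counterpart of the same sign. Such a point $c\in C^f$ with $|c|<\delta$ would be within $\delta$ of some $c'\in C^g$ with $|c'|\ge\delta$ and opposite sign, which is impossible. So the generatorwise matching, or the diagonal sending, always costs at most $\delta$. Taking the maximum over generators gives $d_B^\partial\le\delta$. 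A critical generator for $f$ is handled as the degenerate case of margin $0$.
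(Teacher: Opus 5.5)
Your final scheme is correct and is essentially the paper's proof. The paper writes $\mu(C)$ for the distance from $0$ to $C$, which is the margin. It matches a generator's two RSD points whenever $\mu(C_\gamma^f)>\delta$ or $\mu(C_\gamma^g)>\delta$, using your sign-preservation argument, and otherwise sends both points to $\partial^\infty$. One small correction to your setup: a regular generator whose regularity interval begins at its anchor has $0\in C_\gamma^f$, so ``regular iff $0\notin C_\gamma^f$'' and $R=D_+$ fail there, but its RSD point $(0,R_\gamma)$ lies on $\partial^\infty$, so the matching is unaffected.
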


For an unconditional statement, we introduce a scale. Given $\tau\geq0$, let
$B_\gamma^{f,\tau}$ be the union of the supports of those obstruction bars
whose length is greater than $\tau$. The component of its complement
containing $x_\gamma^f$ defines the $\tau$-robust regularity interval, whenever
the anchor is not itself in $B_\gamma^{f,\tau}$. Its endpoints and slacks give
the robust diagrams $\RPD_\tau$ and $\RSD_\tau$. We write $N_\eps(S)$ for the
closed $\eps$-neighbourhood of a set $S\subseteq\mathbb R$.

\begin{theorem}[Multiscale stability of robust obstruction sets]
\label{thm:robust-badset-stability}
If $\|f-g\|_\infty\leq\eps$, then for every generator $\gamma$ and every
$\tau\geq0$,
$$
B_\gamma^{f,\tau+2\eps}
\subseteq N_\eps(B_\gamma^{g,\tau}),
\qquad
B_\gamma^{g,\tau+2\eps}
\subseteq N_\eps(B_\gamma^{f,\tau}).
$$
After centring at the two anchor values, the neighbourhood radius is
$2\eps$.
\end{theorem}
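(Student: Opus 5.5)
The plan is to obtain the inclusion from the stability of the obstruction module (\Cref{thm:universal-obstruction-stability}) together with the induced matching theorem, used in its set-theoretic form. Fix $\gamma$ and $\tau$. By \Cref{thm:universal-obstruction-stability}, $\Oc_{\gamma,q}^f$ and $\Oc_{\gamma,q}^g$ are $\eps$-interleaved in every degree $q$. I would not rely only on the bottleneck bound. The stronger statement of \cite[Theorem~6.4]{BauerLesnick2015} gives an $\eps$-matching $\chi$ of barcodes with two properties. Every bar of length greater than $2\eps$ is matched. A matched bar $[b,d)$ is sent to a bar $[b',d')$ with $|b-b'|\leq\eps$ and $|d-d'|\leq\eps$.

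Next, take $t\in B_\gamma^{f,\tau+2\eps}$. Then $t$ lies in the support $[b,d)$ of some bar of $\Oc_{\gamma,q}^f$ with $d-b>\tau+2\eps\geq2\eps$, so the bar is matched to some $[b',d')$. This partner satisfies $d'-b'\geq(d-b)-2\eps>\tau$, so it contributes to $B_\gamma^{g,\tau}$. Its support $[b',d')$ lies within distance $\eps$ of every point of $[b,d)$. Indeed, if $t<b'$ then $b'-t\leq b'-b\leq\eps$. If $t\geq d'$ then $t-d'<d-d'\leq\eps$. Hence $t\in N_\eps([b',d'))\subseteq N_\eps(B_\gamma^{g,\tau})$. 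Here I use the closed neighbourhood, and I must check the half-open endpoints: when $t\geq d'$, the distance to $[b',d')$ is $t-d'+0$ in the infimum sense, which is $\leq\eps$. The second inclusion follows by symmetry. Infinite bars need a separate check: they are matched to infinite bars with $|b-b'|\leq\eps$, so the same argument applies.

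For the centred statement, write $C^{f,\tau}_\gamma=B_\gamma^{f,\tau}-x_\gamma^f$. The anchors differ by $|x_\gamma^f-x_\gamma^g|\leq\eps$, since they are the values $f(\gamma)$ and $g(\gamma)$. Translating by the anchor difference therefore moves sets by at most $\eps$, and composing with the previous inclusion gives a radius of $2\eps$.

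The main obstacle is pinning down exactly which matching statement is available. The paper's preliminaries cite only the bottleneck consequence. An optimal bottleneck matching could leave a bar of length in $(2\eps,\tau+2\eps]$ unmatched, and only bars of length greater than $2\eps$ are forced to be matched. That is fine here, because every bar in $B^{f,\tau+2\eps}$ has length greater than $2\eps$ and so is matched. Still, I would state explicitly that an $\eps$-bottleneck matching matches every bar of length greater than $2\eps$, since an unmatched bar costs half its length. I would also state that matched endpoints move by at most $\eps$ in the $\ell_\infty$ sense. Together these give all the properties the argument needs.
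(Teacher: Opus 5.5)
Your proposal is correct and follows the paper's proof essentially step for step. Both arguments take an $\eps$-matching of the graded obstruction barcodes supplied by \Cref{thm:universal-obstruction-stability}. They then observe that a bar of length greater than $\tau+2\eps$ cannot be matched to the diagonal, so its partner has length greater than $\tau$ and $\eps$-close support, and the centred version picks up one more $\eps$ from the anchor shift.

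Your explicit treatment of half-open endpoints and infinite bars is slightly more careful than the paper's. For the left-infinite $\redH_{-1}$ bar it is the death values that are $\eps$-close rather than the births, but the argument is otherwise identical.
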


The theorem gives a scale-shifted form of stability: an obstruction that
survives the larger threshold $\tau+2\eps$ must remain visible, up to an
$\eps$ displacement, at threshold $\tau$ after perturbation. It does not
claim a same-scale generatorwise bound for exact slack, which
\Cref{prop:no-go} rules out. Proofs of all statements in this subsection are
given in \Cref{app:stability-proofs}; the endpoint degrees of a regularity
interval are discussed separately in \Cref{app:endpoint-bigrading}.

%% file: sections/rsd_figure.tex
\begin{figure}[t]
\centering
\begin{tikzpicture}[
  x=1cm,y=1cm,>=Latex,
  panel/.style={font=\small\bfseries},
  lbl/.style={font=\footnotesize},
  slbl/.style={font=\scriptsize}
]

\begin{scope}[shift={(0,0)}]
  \node[panel,anchor=south west] at (-0.55,2.05) {(a)};

  \def\s{0.60}
  \draw[->,black!70] (0,0) -- ({11.6*\s},0);
  \node[lbl,anchor=west,inner sep=3pt] at ({11.6*\s},0) {$t$};

  \draw[line width=2.6pt,red!70!black] ({0.4*\s},0) -- ({2*\s},0);
  \draw[line width=2.6pt,red!70!black] ({9*\s},0) -- ({10.8*\s},0);
  \node[slbl,red!70!black,anchor=south] at ({1.2*\s},0.10) {bad};
  \node[slbl,red!70!black,anchor=south] at ({9.9*\s},0.10) {bad};

  \draw[black!70] ({2*\s},0.09) -- ({2*\s},-0.09);
  \draw[black!70] ({9*\s},0.09) -- ({9*\s},-0.09);
  \node[slbl,anchor=north,inner sep=3pt] at ({2*\s},-0.11) {$\ell_v=2$};
  \node[slbl,anchor=north,inner sep=3pt] at ({9*\s},-0.11) {$r_v=9$};

  \filldraw[blue!65!black] ({5*\s},0) circle (2.1pt);
  \node[slbl,blue!65!black,anchor=south,inner sep=3pt] at ({5*\s},0.06) {$x_v=5$};

  \draw[decorate,decoration={brace,amplitude=5pt},blue!55!black,line width=0.7pt]
    ({2*\s},0.62) -- ({9*\s},0.62);
  \node[lbl,blue!55!black,anchor=south,inner sep=5pt] at ({5.5*\s},0.66)
    {regularity interval $I_v$};

  \draw[<->,black!80,line width=0.7pt] ({2*\s},-0.72) -- ({5*\s},-0.72);
  \draw[<->,black!80,line width=0.7pt] ({5*\s},-0.72) -- ({9*\s},-0.72);
  \draw[black!35,dash pattern=on 1.3pt off 1.5pt] ({5*\s},-0.66) -- ({5*\s},-0.06);
  \node[slbl,anchor=north,inner sep=3pt] at ({3.5*\s},-0.75) {$L_v=3$};
  \node[slbl,anchor=north,inner sep=3pt] at ({7*\s},-0.75) {$R_v=4$};
\end{scope}

\begin{scope}[shift={(8.55,-1.15)}]
  \node[panel,anchor=south west] at (-1.05,3.20) {(b)};

  \draw[red!70!black,line width=1.4pt,-{Latex[length=2mm]}] (0,0) -- (0,3.25);
  \draw[red!70!black,line width=1.4pt,-{Latex[length=2mm]}] (0,0) -- (4.55,0);
  \node[lbl,anchor=east,inner sep=4pt]  at (0,3.20) {$R$};
  \node[lbl,anchor=north,inner sep=4pt] at (4.50,0) {$L$};

  \node[slbl,red!70!black,anchor=north west,align=left,inner sep=2pt]
    at (0.10,-0.16) {critical boundary $\partial^\infty$};

  \draw[black!45,dash pattern=on 1.6pt off 1.6pt] (0.95,1.38) rectangle (2.05,2.48);
  \filldraw[blue!65!black] (1.50,1.93) circle (2.1pt);
  \node[slbl,blue!65!black,anchor=west,inner sep=4pt] at (1.56,1.95) {$v$};
  \node[slbl,blue!65!black,anchor=north,inner sep=4pt] at (1.50,1.33)
    {$(L_v,R_v)=(3,4)$};
  \node[slbl,black!65,anchor=west,inner sep=4pt] at (2.10,1.93)
    {$\ell_\infty$-ball, radius $\delta$};

  \filldraw[blue!65!black] (3.60,0.60) circle (2.1pt);
  \node[slbl,blue!65!black,anchor=west,inner sep=4pt] at (3.68,0.60) {$\gamma'$};

  \filldraw[blue!65!black] (0.45,3.02) circle (2.1pt);
  \node[slbl,blue!65!black,anchor=west,inner sep=4pt] at (0.52,3.02)
    {$\gamma''$, near-critical};
  \draw[{Latex[length=1.3mm]}-{Latex[length=1.3mm]},black!80,line width=0.6pt]
    (0,2.74) -- (0.45,2.74);
  \node[slbl,anchor=west,inner sep=2pt] at (0.50,2.62) {$m_{\gamma''}$};
\end{scope}

\end{tikzpicture}
\caption{Regularity around one generator. (a) The obstruction module of the
vertex $v$ vanishes between the two red parts of its bad set. The anchor
$x_v=5$ divides this regularity interval into slacks $L_v=3$ and $R_v=4$.
(b) The corresponding point $(L_v,R_v)$ in the regularity slack diagram. Its
$\ell_\infty$-distance to the boundary is the margin $m_v$; points sufficiently
close to the boundary may instead be matched to it.}
\label{fig:rsd}
\end{figure}

%% file: sections/experiments.tex
\section{Experiments}
\label{sec:experiments}

We evaluate \textsc{HomTrim} as a preprocessing step for weighted flag
filtrations. An input is a graph $G=(V,E)$ with edge function
$w:E\to\mathbb R$; an edge appears at its assigned value, and each clique
appears at the largest value of one of its edges. The parameter $p$ is the
largest homological degree certified by the preprocessor and computed by the
persistence backend.

\paragraph{Pipelines and measurements.}
We compare three pipelines. \textsc{Raw} runs Ripser directly on the input
\cite{Bauer2021}. \textsc{DomTrim} applies domination-based trimming before
Ripser. \textsc{HomTrim} first applies the same domination test and then the
homological certificate of \Cref{sec:certificate}.

For each pipeline, we distinguish preprocessing time, Ripser time, and their
sum. We write $m$, $m_D$, and $m_H$ for the numbers of input edges, edges after
\textsc{DomTrim}, and edges after \textsc{HomTrim}. All runs use $\eps=0$, so
every deletion is exact and the trimmed filtration has the same persistence
diagrams as the input through degree $p$. The certificates establish
correctness; direct barcode comparisons are used only as implementation
checks. The implementation is the sequential \Cref{alg:future-trim}; the round
scheme of \Cref{sec:batch} is not part of these measurements.

\paragraph{Input families.}
The geometric inputs are symmetrized $k$-nearest-neighbour graphs on uniform
points in $[0,1]^d$, with Euclidean edge weights. The \emph{sparse-matrix}
family plants small random cliques in a sparse weighted graph and fills the
remaining edge budget with random edges. The \emph{block-clique} family is a
disjoint collection of weighted complete graphs on small blocks. It is
artificial, but isolates the locally clique-rich structure on which lower-link
tests are most useful. A case is one generated graph together with its
parameters, value of $p$, and random seed.

\paragraph{Reduction beyond domination.}
\Cref{tab:value-add-compact} answers the main experimental question:
homological trimming consistently removes edges left by domination. The
additional reduction ranges from $12.8\%$ to $40.6\%$ of the original edge
set. Ripser is correspondingly $1.55\times$ to $4.61\times$ faster on the
\textsc{HomTrim} output than on the \textsc{DomTrim} output. These backend
gains do not by themselves give a faster full pipeline. In every row of this
table the last column is below one: the extra local-homology work costs more
than it saves in the backend, so the complete \textsc{HomTrim} pipeline is
slower than the \textsc{DomTrim} pipeline.

\input{tables/value_add_compact}

\paragraph{When preprocessing pays for itself.}
Against direct computation the comparison changes with density.
\Cref{tab:sparsity-compact} compares the complete \textsc{HomTrim} pipeline
with raw Ripser for $p=2$ and $n=10{,}000$. At low average degree, trimming
reduces the graph but its overhead dominates. As the degree grows, the ratio
moves towards trimming in both families: raw Ripser slows down faster than the
trimmed pipeline. For the sparse-matrix
family at average degree $48$, raw Ripser takes $4.79$ seconds and the
complete \textsc{HomTrim} pipeline takes $3.87$ seconds, an end-to-end speedup
of $1.24\times$. On a single instance of degree $64$, raw Ripser times out
while the trimmed pipeline completes in $303.23$ seconds. We report this as an extension of the
computable range, not as a finite speedup ratio. On the geometric family the
pipeline is still slower than raw Ripser at degree $48$, although the gap has
narrowed to $1.70$ against $2.12$ seconds.

\input{tables/sparsity_compact}

\paragraph{Higher degree.}
As $p$ increases, the certificate must rule out more local homology and the
additional reduction decreases. It remains visible at $p=3$, where the three
families in \Cref{tab:value-add-compact} show extra reductions of
$12.8\%$--$33.6\%$. We also ran trim-only stress tests for $p=5,6,7$ on sparse
block-clique and sparse-matrix inputs. Ripser encounters a simplex-index
overflow on these high-dimensional flag filtrations, so the runs are not used
for end-to-end timing. As \Cref{tab:high-p-trim-only} shows, the local
preprocessor itself removes $36.9\%$--$44.7\%$ of the edges.

\input{tables/high_p_trim_only}

\paragraph{Interpretation.}
Homological links remove edges that domination leaves, and the backend runs
correspondingly faster. Whether the complete pipeline is faster depends on the
input. Against direct computation it wins on the denser sparse-matrix inputs
and extends the range of instances that can be computed at all; against
domination-only preprocessing, the local homology is not yet recovered end to
end in our sequential implementation. The present tool is therefore a
certified preprocessor for moderately sparse, locally clique-rich inputs, not
a replacement for direct persistence computation.

A small controlled experiment with the regularity descriptors of
\Cref{sec:diagrams} appears in \Cref{app:rsd-experiment}. It illustrates the
descriptors rather than validating a downstream application, so we keep it
separate from these measurements.

\FloatBarrier

%% file: tables/value_add_compact.tex
\begin{table}[ht]
\centering
\scriptsize
\setlength{\tabcolsep}{3.8pt}
\renewcommand{\arraystretch}{1.0}
\begin{tabular}{llrcccc}
\toprule
Family & \(p\) & cases & \(m\to m_D\to m_H\) & extra & PH sp. & Dom/Hom total \\
\midrule
block\_clique & 2 & 18 & 86,186\(\to\)85,604\(\to\)50,273 & 40.6\% & 2.64\(\times\) & 0.25\(\times\) \\
geom\_knn3 & 2 & 12 & 117,584\(\to\)55,257\(\to\)35,230 & 16.9\% & 2.04\(\times\) & 0.83\(\times\) \\
sparse\_matrix & 2 & 12 & 105,000\(\to\)82,803\(\to\)68,705 & 13.2\% & 1.55\(\times\) & 0.87\(\times\) \\
block\_clique & 3 & 9 & 57,436\(\to\)57,042\(\to\)37,971 & 33.6\% & 4.61\(\times\) & 0.18\(\times\) \\
geom\_knn3 & 3 & 6 & 78,609\(\to\)36,193\(\to\)23,124 & 16.6\% & 2.42\(\times\) & 0.90\(\times\) \\
sparse\_matrix & 3 & 6 & 70,000\(\to\)55,992\(\to\)46,940 & 12.8\% & 1.66\(\times\) & 0.95\(\times\) \\
\bottomrule
\end{tabular}
\caption{Value added by \textsc{HomTrim} over domination-only trimming.
Here $m$, $m_D$, and $m_H$ are the edge counts before trimming, after
\textsc{DomTrim}, and after \textsc{HomTrim}. ``Extra'' is the mean additional
reduction $100(m_D-m_H)/m$. ``PH sp.'' is the ratio of Ripser times
$T_D^{\rm PH}/T_H^{\rm PH}$, and ``Dom/Hom total'' is the ratio of total
pipeline times $T_D^{\rm total}/T_H^{\rm total}$. Thus a total-time ratio below
one means that \textsc{HomTrim} is slower after preprocessing is included.}
\label{tab:value-add-compact}
\end{table}

%% file: tables/sparsity_compact.tex
\begin{table}[ht]
\centering
\scriptsize
\setlength{\tabcolsep}{2.5pt}
\renewcommand{\arraystretch}{0.95}
\begin{tabular}{llrccc}
\toprule
Family & degree & seeds & \(m\to m_D\to m_H\) & extra & raw/Hom total \\
\midrule
geom\_knn3 & 12 & 1 & 69,046\(\to\)37,494\(\to\)32,444 & 7.3\% & 0.13/0.28 \\
geom\_knn3 & 24 & 1 & 134,730\(\to\)67,681\(\to\)45,216 & 16.7\% & 0.47/0.79 \\
geom\_knn3 & 48 & 1 & 266,199\(\to\)103,455\(\to\)57,721 & 17.2\% & 1.70/2.12 \\
sparse\_matrix & 12 & 1 & 60,000\(\to\)42,278\(\to\)37,181 & 8.5\% & 0.12/0.23 \\
sparse\_matrix & 24 & 3 & 120,000\(\to\)91,481\(\to\)76,297 & 12.7\% & 0.41/0.74 \\
sparse\_matrix & 48 & 3 & 240,000\(\to\)196,847\(\to\)165,358 & 13.1\% & 4.79/3.87 \\
sparse\_matrix & 64 & 1 & 320,000\(\to\)270,471\(\to\)233,703 & 11.5\% & timeout/303.23 \\
\bottomrule
\end{tabular}
\caption{Sparsity envelope for $p=2$ and $n=10{,}000$. The degree column is
the target average graph degree, and ``seeds'' is the number of independent
instances averaged. ``Raw/Hom total'' gives, in seconds, raw Ripser time and
the complete \textsc{HomTrim} pipeline time, including both trimming stages
and Ripser on the reduced graph.}
\label{tab:sparsity-compact}
\end{table}

%% file: tables/high_p_trim_only.tex
\begin{table}[ht]
\centering
\scriptsize
\setlength{\tabcolsep}{3.2pt}
\renewcommand{\arraystretch}{0.95}
\begin{tabular}{llrrrrrr}
\toprule
Family & \(n\) & parameter & \(p\) & \(m\) & \(m_H\) & red. & trim \\
\midrule
block\_clique & 5,000 & \(b=10\) & 5--7 & 22,500 & 12,451 & 44.7\% & 0.12 \\
block\_clique & 5,000 & \(b=12\) & 5--7 & 27,484 & 15,522 & 43.5\% & 0.29 \\
block\_clique & 10,000 & \(b=10\) & 5--7 & 45,000 & 24,891 & 44.7\% & 0.23 \\
sparse\_matrix & 5,000 & deg. 16 & 5--7 & 40,000 & 25,230 & 36.9\% & 0.15 \\
\bottomrule
\end{tabular}
\caption{High-degree trim-only stress tests. Here $m$ and $m_H$ are the edge
counts before and after \textsc{HomTrim}; ``red.'' is the percentage removed,
and trim time is in seconds. These rows are not PH-timed because Ripser
encounters a simplex-index overflow on the original instances.}
\label{tab:high-p-trim-only}
\end{table}

%% file: sections/conclusion.tex
\section{Concluding remarks}
\label{sec:conclusion}

The effect of deleting a generator on persistent homology is recorded exactly
by the future part of one filtered link. The resulting certificate is weaker
than conicality, it degrades gradually instead of failing outright, and for
deletions that share no simplex the costs combine by maximum rather than by
sum. Several questions remain open.

\paragraph{Dependent deletions.}
When deletions share simplices, we know no bound better than the sum of their
costs, and in the prototype runs of \Cref{sec:batch} short dependent
sequences nearly attained it. Whether long sequences can is open. A natural tool is the
filtration of $C_*(K,K\setminus S)$ by the number of elements of $S$ that a
simplex contains. It has as many layers as the largest number of elements of
$S$ in a common simplex, and its first layer splits over the elements of $S$
by the argument of \Cref{lem:splitting}. Identifying the homology of the
higher layers is the obstacle.

\paragraph{Order.}
The reduction depends on the processing order, and costs are not monotone
under deletion. We use reverse filtration order as a heuristic. It would be
useful to know a natural class of inputs on which some order is optimal, or
within a constant factor of optimal, for this certificate.

\paragraph{Implementation.}
Against domination-only preprocessing, the local homology is not yet
recovered end to end. The round scheme of \Cref{sec:batch} can be parallelized
within each round, and measuring it inside the full pipeline is the most
direct test of whether the backend savings can be kept.

\paragraph{Other generators and descriptors.}
The deletion theorem holds for generators of every dimension, but our
algorithms treat only vertices and edges. For the regularity descriptors, the
open question is whether slack diagrams carry useful information on real data;
our experiment only shows that they separate synthetic families built to
differ.

%% file: sections/backmatter.tex
\paragraph{Software and reproducibility.}
A package containing the implementation, experiment drivers, fixed seeds,
reference outputs, and table-generation instructions is available at
\url{https://github.com/siddharthpritam/homtrim}.

\paragraph{Declaration on the use of generative AI.}
The author used ChatGPT (GPT-5.6) and Claude (Sonnet 5 and Opus 5) for
assistance with code, experiments, editorial revision, consistency checks, and
exploratory proof discussions. The author reviewed and validated all
mathematical statements, proofs, code, experiments, and conclusions.

%% file: sections/appendix_missing_proofs.tex
\section{Deferred proofs}
\label{app:missing-proofs}

This appendix contains the two arguments deferred from the main text: the full
edge-filtration local-to-global proof and the stability proofs for regularity
slack.

\subsection{The edge local-to-global criterion}
\label{app:flag-local-global-proof}

\begin{proof}[Proof of \Cref{thm:flag-local-global}]
Assume first that $x_e<s$. Let $w'$ be obtained from $w$ by changing only the
value of $e$ from $x_e$ to $s$. The filtrations agree for $t<x_e$ and for
$t\geq s$. For $x_e\leq t<s$, we have $K_t^{w'}\subseteq K_t^w$. Set
$A_t=A_e^w(t)$. The simplices in $K_t^w$ but not in $K_t^{w'}$ are exactly
those containing $e$, and they form the block $e*A_t$. Every proper face of
$e$ joined with $A_t$ is already present, so
$$
K_t^w=K_t^{w'}\cup(e*A_t),
\qquad
K_t^{w'}\cap(e*A_t)=(\partial e)*A_t.
$$
Excision and \Cref{lem:join-pair} give
$$
H_k(K_t^w,K_t^{w'})
\cong
H_k\bigl(e*A_t,(\partial e)*A_t\bigr)
\cong\redH_{k-2}(A_t).
$$
The hypothesis makes these groups zero. The long exact sequence of the pair
therefore shows that $K_t^{w'}\hookrightarrow K_t^w$ induces an isomorphism
in every degree. These inclusions commute with the filtration maps, while
outside $[x_e,s)$ the two levels are identical. The levelwise isomorphisms
thus form an isomorphism
$$
H_k(\mathcal F_{w'})\cong H_k(\mathcal F_w)
$$
of persistence modules. If $s<x_e$, the same argument applies with the two
filtrations reversed.
\end{proof}

\subsection{Stability of obstruction and slack}
\label{app:stability-proofs}

\begin{proof}[Proof of \Cref{thm:universal-obstruction-stability}]
Consider first lower-star filtrations. If
$\tau\in A_\gamma^f(t)$, then every vertex $u\in\tau$ satisfies
$f(u)\leq t$. The bound $\|f-g\|_\infty\leq\eps$ implies
$g(u)\leq t+\eps$, and hence
$$
A_\gamma^f(t)\subseteq A_\gamma^g(t+\eps).
$$
Interchanging $f$ and $g$ gives the reverse shifted inclusion. These
inclusions commute with the filtration maps, so reduced homology gives an
$\eps$-interleaving in every degree. For edge-filtered flag complexes, the
same argument is applied to every edge needed to form a local simplex with
$\gamma$. Algebraic stability then gives the stated bottleneck bound
\cite{ChazalEtAl2016,Lesnick2015}.
\end{proof}

\begin{proof}[Proof of \Cref{prop:no-go}]
We give an explicit edge-filtration example. Fix numbers $1<s<T$ and let
$0<\eps<s-1$. Take an edge $e=uv$ with value $0$. Its common neighbours are
$$
r,a,b,c,d,h,j.
$$
Give all edges from $u$ or $v$ to these vertices value $-2$. Among the common
neighbours, include the six edges
$$
ra,\ ab,\ bc,\ rd,\ dh,\ hj
$$
with value $-1$. They form two paths sharing the vertex $r$. Add the edges
$cr$, $rb$, and $jr$, and include no other edges among the common neighbours.

Define two edge functions $f$ and $g$ that agree except on $cr$ and $rb$:
$$
f(cr)=f(rb)=s,
\qquad
g(cr)=s-\eps,
\qquad
g(rb)=s+\eps,
$$
and set $f(jr)=g(jr)=T$. Thus $\|f-g\|_\infty=\eps$.

At time $-1$, the filtered link of $e$ is a tree and is therefore acyclic. In
the $f$-filtration, the edge $cr$ closes the first four-cycle at the same time
that the diagonal $rb$ appears. The flag complex immediately contains the two
triangles that fill this cycle, so no obstruction is created. At time $T$, the
edge $jr$ closes the second four-cycle, which has no diagonal and gives a
persistent class in $\redH_1$.

In the $g$-filtration, the first cycle instead exists on
$[s-\eps,s+\eps)$: $cr$ creates it and $rb$ later fills it. The second cycle
still appears at $T$. Consequently, the regularity interval of $e$ containing
its anchor is
$$
[-1,T)
\quad\text{for }f,
\qquad
[-1,s-\eps)
\quad\text{for }g.
$$
The right endpoint and forward slack therefore change by
$T-s+\eps$, while the perturbation size is $\eps$. Letting $\eps$ tend to
zero rules out a universal Lipschitz constant.
\end{proof}

\begin{proof}[Proof of \Cref{thm:badset-slack-stability}]
Fix a generator and omit it from the notation. For a closed set
$C\subseteq\mathbb R$, write $\mu(C)$ for the distance from $0$ to $C$. The
bad set is never empty, since the filtered link is empty for small $t$ and
then $\redH_{-1}\neq0$; so $\mu(C)$ is finite and backward slacks are finite.
Forward slacks may be infinite, and in $\ell_\infty$-costs we use
$|\infty-\infty|=0$. If the generator is critical in $f$, then
$0\in C^f$ and $\mu(C^f)=0$. If it is regular, the open interval
$(-L^f,R^f)$ misses $C^f$ and its finite endpoints lie in $C^f$, so
$\mu(C^f)=m^f$, the $\ell_\infty$-distance from its RSD point to the boundary.
Because $d_H(C^f,C^g)\leq\delta$, every point of either set lies within
$\delta$ of a point of the other, and in particular
$|\mu(C^f)-\mu(C^g)|\leq\delta$.

Suppose first that $\mu(C^f)>\delta$. Then $\mu(C^g)>0$, so the generator is
regular in both filtrations. We show that $|R^f-R^g|\leq\delta$. If
$R^f<\infty$, then $R^f\in C^f$ and $R^f\geq\mu(C^f)>\delta$. A point
$c'\in C^g$ with $|c'-R^f|\leq\delta$ is therefore positive, and
$R^g\leq c'\leq R^f+\delta$. If $R^g<\infty$, choose $c\in C^f$ with
$|c-R^g|\leq\delta$. Then $c\geq-\delta$ and $|c|\geq\mu(C^f)>\delta$, so $c$
is positive and $R^f\leq c\leq R^g+\delta$. These two inequalities also show
that $R^f$ and $R^g$ are finite together. The same argument applies to the
backward slacks. The two RSD points are therefore within $\ell_\infty$-distance
$\delta$, and we match them. The case $\mu(C^g)>\delta$ is symmetric.

Otherwise $\mu(C^f)\leq\delta$ and $\mu(C^g)\leq\delta$. Every RSD point the
generator has, in either filtration, lies within $\delta$ of the boundary, and
we match it there. A generator that is critical in one filtration falls into
this case, since its bad set there contains $0$.

Doing this for every generator matches each point of one diagram either to a
point of the other or to the boundary, at cost at most $\delta$. Hence
$d_B^\partial\bigl(\RSD(\mathcal F_f),\RSD(\mathcal F_g)\bigr)\leq\delta$.
\end{proof}

\begin{proof}[Proof of \Cref{thm:robust-badset-stability}]
By \Cref{thm:universal-obstruction-stability}, the two obstruction barcodes
admit an $\eps$-matching in each degree. Take the union of these graded
matchings, and let $[b,d)$ be an $f$-bar of length greater than
$\tau+2\eps$. It cannot be matched to the diagonal, because its distance from
the diagonal is greater than $\eps$. It is therefore matched to a $g$-bar
$[b',d')$ satisfying
$$
|b-b'|\leq\eps,
\qquad
|d-d'|\leq\eps.
$$
The matched bar has length greater than $\tau$, and the support of $[b,d)$ is
contained in the $\eps$-neighbourhood of the support of $[b',d')$. Taking the
union over all such bars gives
$$
B_\gamma^{f,\tau+2\eps}
\subseteq N_\eps(B_\gamma^{g,\tau}).
$$
Interchanging $f$ and $g$ gives the second inclusion. Finally, centring a bad
set subtracts the anchor value. Since the two anchor values differ by at most
$\eps$, this adds at most another $\eps$ to the neighbourhood radius.
\end{proof}

%% file: sections/appendix_endpoint_blockers.tex
\section{Homological degrees at regularity endpoints}
\label{app:endpoint-bigrading}

The two slacks record when regularity ends. The graded obstruction barcode also
records which homological degree causes each endpoint.

Fix a regular generator $\gamma$ with finite regularity interval endpoints
$\ell_\gamma$ and $r_\gamma$. Treating bars with multiplicity, define the
left and right blocker multisets by
$$
\Bc_\gamma^-
=\{(q,[b,d)):[b,d)\in\Dgm_q(\Oc_\gamma),\ d=\ell_\gamma\},
$$
and
$$
\Bc_\gamma^+
=\{(q,[b,d)):[b,d)\in\Dgm_q(\Oc_\gamma),\ b=r_\gamma\}.
$$
Thus a left blocker is an obstruction that disappears where the regular window
begins, while a right blocker is one that appears where the window ends.

\begin{definition}[Endpoint-simple regularity interval]
A regularity interval is \emph{endpoint-simple} if each blocker multiset
contains exactly one bar. If those bars occur in degrees $q_\gamma^-$ and
$q_\gamma^+$, respectively, we call
$$
(q_\gamma^-,q_\gamma^+)
$$
the \emph{endpoint bi-grade} of $\gamma$.
\end{definition}

The bi-grade distinguishes regularity windows with the same two slacks but
different local mechanisms. For example, a right endpoint may be caused by a
link becoming disconnected in degree zero or by a cycle appearing in degree
one. If several bars begin or end together, the full blocker multiset is the
appropriate descriptor; no single bi-grade is assigned. Infinite endpoints
have no blocker on the corresponding side.

%% file: sections/appendix_rsd_experiment.tex
\section{A controlled experiment with slack summaries}
\label{app:rsd-experiment}

The main experiments use future obstruction as a deletion certificate. Here we
give a small synthetic illustration of its descriptive side. The question is
only whether summaries of the regularity slack diagram distinguish deliberately
different local obstruction mechanisms.

\paragraph{Setup.}
We generated three families of edge-filtered graphs. The \emph{clique-rich}
family has dense weighted blocks and sparse cross-block noise. The
\emph{cycle-rich} family contains local gadgets whose links begin as trees and
later acquire cycle-closing edges. The \emph{sparse-random} family is an
Erd\H{o}s--R\'enyi baseline with independent edge weights.

For each edge we used the degree bound $p=2$: its local link is regular when it
is nonempty, connected, and has trivial first homology over $\mathbb F_2$. We
then recorded five graph-level summaries: the fraction of regular edges, the
median and $90$th percentile of the margin, the fraction with finite forward
slack, and the fraction whose right endpoint is blocked in degree one.
\Cref{tab:rsd-structural-summaries} gives the mean and standard deviation over
$20$ graphs from each family.

\input{tables/rsd_structural_summaries}

\paragraph{Observed separation.}
The median margin alone does not separate the clique-rich and sparse-random
families: both medians are $0.148$, and classification by the median alone is
close to chance ($22/60$). Their regular-edge fractions and endpoint types are
different, however. The cycle-rich family has fewer regular edges, more finite forward
slacks, and more right endpoints caused by degree-one obstructions. In this
controlled setting, direction and obstruction degree therefore retain
information lost by reducing every edge to a single margin.

As a simple check, we performed leave-one-out nearest-centroid classification
on the graph-level summaries. For each of the $60$ graphs, the centroids were
computed from the other $59$ graphs and the held-out graph was assigned to the
nearest family. \Cref{tab:rsd-structural-classification} reports the results.

\input{tables/rsd_structural_classification}

The strong separation is expected because the families were constructed to
have different local mechanisms. It shows that the proposed summaries detect
those mechanisms; it does not establish predictive value on real data.

%% file: tables/rsd_structural_summaries.tex
\begin{table}[ht]
\centering
\scriptsize
\setlength{\tabcolsep}{2.8pt}
\renewcommand{\arraystretch}{1.05}
\begin{tabular}{@{}lrrrrrr@{}}
\toprule
Family
& \(|E|\)
& reg.
& med. \(m\)
& \(90\%\) \(m\)
& finite \(R\)
& \(H_1\)-block \(R\) \\
\midrule
clique-rich
& \(388.9\pm 6.1\)
& \(0.532\pm 0.023\)
& \(0.148\pm 0.025\)
& \(0.473\pm 0.052\)
& \(0.249\pm 0.037\)
& \(0.073\pm 0.021\) \\
cycle-rich
& \(280.8\pm 4.5\)
& \(0.144\pm 0.015\)
& \(0.118\pm 0.021\)
& \(0.224\pm 0.011\)
& \(0.644\pm 0.039\)
& \(0.292\pm 0.024\) \\
sparse-random
& \(449.0\pm 17.8\)
& \(0.206\pm 0.020\)
& \(0.148\pm 0.022\)
& \(0.454\pm 0.051\)
& \(0.368\pm 0.048\)
& \(0.000\pm 0.000\) \\
\bottomrule
\end{tabular}
\caption{RSD summaries for three synthetic families. The columns report edge
count, regular-edge fraction, median and $90$th-percentile margin, fraction
with finite forward slack, and fraction whose right endpoint is blocked in
degree one.}
\label{tab:rsd-structural-summaries}
\end{table}

%% file: tables/rsd_structural_classification.tex
\begin{table}[ht]
\centering
\scriptsize
\setlength{\tabcolsep}{5pt}
\renewcommand{\arraystretch}{1.05}
\begin{tabular}{@{}lr@{}}
\toprule
Features used & accuracy \\
\midrule
all five RSD summaries & \(60/60\) \\
regular fraction + right \(H_1\) blocker fraction & \(60/60\) \\
finite-forward fraction + right \(H_1\) blocker fraction & \(59/60\) \\
regular fraction only & \(58/60\) \\
right \(H_1\) blocker fraction only & \(59/60\) \\
median margin only & \(22/60\) \\
\bottomrule
\end{tabular}
\caption{Leave-one-out nearest-centroid check using graph-level RSD summaries.
These controlled results show separation among the three constructed families;
they are not an application-level validation.}
\label{tab:rsd-structural-classification}
\end{table}